\theoremstyle{definition}
\newtheorem{thm}{Theorem}
\newtheorem{lem}{Lemma}
\theoremstyle{definition}
\newtheorem{defn}{Definition}
\newtheorem{exmp}{Example}
\newcommand{\mbR}{{\mathbb{R}}}
\newcommand{\mbZ}{{\mathbb{Z}}}
\newcommand{\Rd}{\partial}
\newcommand{\cU}{\mathcal{U}}
\newcommand{\cV}{\mathcal{V}}
\newcommand{\cS}{\mathcal{S}}
\begin{document}

\title{
A structural model on a hypercube \\
represented by optimal transport
}
\author{Tomonari SEI}

\maketitle

\begin{abstract}
We propose a flexible statistical model
for high-dimensional quantitative data on a hypercube.
Our model, called the structural gradient model (SGM),
is based on a one-to-one map on the hypercube that is a solution for
an optimal transport problem.
As we show with many examples,
SGM can describe various dependence structures including
correlation and heteroscedasticity.
The maximum likelihood estimation of SGM is effectively solved by
the determinant-maximization programming.
In particular, a lasso-type estimation is available by adding constraints.
SGM is compared with graphical Gaussian models and mixture models.
%It is shown that Efron's statistical curvature of SGM is
%smaller than that of a mixture model 
%with the same score vector.

\noindent
{Keywords:}
 determinant maximization,
 Fourier series,
 graphical model,
 lasso,
 optimal transport,
 structural gradient model.
\end{abstract}

\setlength{\baselineskip}{18pt}

\section{Introduction} \label{section:intro}

In recent years, it becomes more important
to treat high-dimensional quantitative data
especially in biostatistics and spatial-temporal statistics.
The graphical Gaussian model is one of the most important model.
However, the Gaussian model represents only the second-order
interaction without heteroscedasticity.
In this paper, we introduce the structural gradient model (SGM)
that represents both higher-order and heteroscedastic interactions
of data. The model is defined by a transport map that
pushes the target probability density forward to the uniform density.
The data structure is described by the parameters in the transport map.
This model is a practical specification of the gradient model
defined in \cite{sei2006}.

We consider probability density functions
on the hypercube $[0,1]^m$ written as
\begin{equation}
 p(x)\ =\ \det(D^2\psi(x)),
  \quad x\in [0,1]^m,
  \label{eqn:grad}
\end{equation}
where $\psi$ is a convex function and $D^2\psi(x)$
is the Hessian matrix of $\psi$ at $x$.
The function $p$ is a probability density function
if the gradient map $D\psi$ is a bijection on $[0,1]^m$.
In fact, by changing the variable from $x$ to $y=D\psi(x)$,
we obtain
\[
 \int_{[0,1]^m} \det(D^2\psi(x))\mathrm{d}x
 \ =\ \int_{[0,1]^m} \det\left(\frac{\Rd y}{\Rd x}\right)\mathrm{d}x
 \ =\ \int_{[0,1]^m} \mathrm{d}y
 \ =\ 1.
\]
It is known that {\it any} probability density function on $[0,1]^m$
(actually on $\mbR^m$) is written as (\ref{eqn:grad}).
This fact is deeply connected to the theory of optimal transport
(see e.g.\ \cite{villani2003}).
The bijective gradient map $D\psi$, called the Brenier map, is the optimal-transport plan
from the density (\ref{eqn:grad}) to the uniform density.
In this paper, we call $\psi$ {\it the potential function}.
Furthermore, as explained in Section~\ref{section:SGM},
most density functions on $[0,1]^m$ are characterized by
the Fourier series of $\psi$.
When $\psi$ is represented by the Fourier series,
we will call the model (\ref{eqn:grad}) {\it the structural gradient model}
and refer to it as SGM.
Unknown parameters are the Fourier coefficients of the potential function $\psi$.
SGM can describe not only two-dimensional correlations
but also the three-dimensional interactions and heteroscedastic structures,
unlike the graphical Gaussian model.
We examine this flexibility by simulation
and real-data analysis.

The maximum likelihood estimation of SGM
is reduced to a determinant maximization problem
with a robust convex feasible region.
In practice, this region is not directly used
because it is described by infinitely many constraints.
We give two different approaches to overcome this difficulty.
First we give a sequence converging to the feasible region from the inner side.
Secondly we give a $L^1$-conservative region.
These approaches enable us to calculate the estimator
by the determinant maximization algorithm (\cite{vandenberghe1998}).
As a by-product of the second approach we have a lasso-type estimator
for SGM.
A related estimator is the lasso-type estimator for 
graphical Gaussian models
(\cite{meinshausen2006}, \cite{yuan2007}, \cite{bunea2007}, \cite{banerjee2008}).

%Our model uses Fourier series of the function characterizing
%the statistical model.
%There are at least two other Fourier-based models.
%One is the Fourier expansion of the probability density function.
%This is a mixture model with respect to the Fourier coefficients.
%The another one is the Fourier expansion of the log probability density function.
%This is an exponential model with respect to the Fourier coefficients.
%It is sometimes natural to use
%exponential models for statistical modeling, e.g., for
%describing conditional independence.
%However, it is typically expensive to compute
%the normalizing constant of the exponential models.
%The mixture model and gradient model have explicitly computable form of the 
%density function.
%Hence we discuss
%which of the gradient and mixture model is close to exponential models.
%We use Efron's statistical curvature \cite{efron1975}
%to measure closeness between a model and the exponential models.
%We will prove that the statistical curvature of the gradient model
%at the origin of the parameter space is always
%smaller than that of the mixture model.

We consider only the case in which the sample space is a hypercube.
However, this is not a strong assumption because
we can transform any real-valued data into $[0,1]$-valued data by
a fixed sigmoid function.
Unlike the copula models (\cite{nelsen2006}),
the marginal density of SGM does not need to be uniform.
Our model can still adjust the marginal densities
after the sigmoid transform.
Another approach to deal with unbounded data
is given by the author's past papers (\cite{sei2006}, \cite{sei2007}),
where optimal transport between the standard normal density
and other densities is considered.
In this paper, we use the uniform density instead of the normal density
because the former is analytically simpler than the latter.

This paper is organized as follows.
In Section~\ref{section:SGM},
we define SGM and give various examples of it.
In Section~\ref{section:mle},
we investigate the maximum likelihood estimation and
propose a lasso-type estimator.
In Section~\ref{section:numerical},
we compare SGM with graphical Gaussian models
and mixture models by numerical experiments.
Finally we have some discussions in Section~\ref{section:discussion}.
All mathematical proofs are given in Appendix.

\section{The structural gradient model (SGM)} \label{section:SGM}

In this section, we first give the formal definition
and some theoretical properties of SGM.
Then various examples follow.

\subsection{Definition and basic facts} \label{subsection:SGM}

Let $m$ be a fixed positive integer.
Denote the gradient operator on $[0,1]^m$ by $D=(\Rd/\Rd x_i)_{i=1}^m$
and the Hessian operator by $D^2=(\Rd^2/\Rd x_i\Rd x_j)_{i,j=1}^m$.
The determinant of a matrix $A$ is denoted by $\det A$.
The notation $A\succ B$ (resp.\ $A\succeq B$)
means that $A-B$ is positive definite (resp.\ positive semi-definite).
Let $\mbZ_{\geq 0}$ be the set of all non-negative integers.

\begin{defn}[SGM] \label{defn:SGM}
 Let $\cU$ be a finite subset of $\mbZ_{\geq 0}^m$.
 We define {\it the structural gradient model} (abbreviated as {\it SGM}) by
 Eq.~(\ref{eqn:grad}) with the potential function
 \begin{equation}
  \psi(x|\theta)
   \ =\ \frac{1}{2}x^{\top}x
 - \sum_{u\in\cU} \frac{\theta_u}{\pi^2}\prod_{j=1}^m \cos(\pi u_jx_j),
  \label{eqn:SGM}
 \end{equation}
 where $x=(x_j)\in[0,1]^m$
 and $\theta=(\theta_u)\in \mbR^{\cU}$.
 We call $\cU$ {\it the frequency set}.
 The parameter space of SGM is
 \begin{eqnarray}
 \Theta
 &=& 
 \left\{\theta\in\mbR^{\cU} \mid
 D^2\psi(x|\theta)\succeq 0,\quad \forall x\in[0,1]^m
 \right\}.
 \end{eqnarray}
 A vector $\theta\in\mbR^{\cU}$ is called {\it feasible}
 if $\theta\in\Theta$.
 We also call $\Theta$ {\it the feasible region}.
 \qed
\end{defn}

The following lemma is fundamental.

\begin{lem} \label{lem:fundamental}
 If $\theta$ is feasible, then $p(x|\theta)$ is a probability density function on $[0,1]^m$.
\end{lem}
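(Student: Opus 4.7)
The plan is to verify the two defining properties of a probability density: nonnegativity, and integration to $1$. Nonnegativity is immediate from the definition of $\Theta$: feasibility gives $D^2\psi(x|\theta)\succeq 0$ pointwise, and the determinant of a positive semidefinite matrix is nonnegative, so $p(x|\theta)=\det D^2\psi(x|\theta)\geq 0$. The substantive content is the integral identity, which the introduction derived informally via the substitution $y=D\psi(x)$; my job is to justify that change of variables by showing $F:=D\psi(\,\cdot\,|\theta)$ is essentially a bijection of $[0,1]^m$ onto itself.

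The structural observation that powers everything is that
\[
 \frac{\Rd\psi}{\Rd x_i}(x|\theta)\ =\ x_i+\sum_{u\in\cU}\frac{\theta_u u_i}{\pi}\sin(\pi u_i x_i)\prod_{j\neq i}\cos(\pi u_j x_j),
\]
and since $u_i\in\mbZ_{\geq 0}$, the factor $\sin(\pi u_i x_i)$ vanishes at $x_i\in\{0,1\}$. Hence $F_i(x)=x_i$ on the face $\{x_i\in\{0,1\}\}$. Combined with feasibility, which in particular forces $\Rd^2\psi/\Rd x_i^2\geq 0$, the map $x_i\mapsto F_i(x)$ is nondecreasing and runs from $0$ to $1$ along every fiber. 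Therefore $F([0,1]^m)\subseteq[0,1]^m$ and $F$ sends the face $\{x_i=c\}$ of $[0,1]^m$ into the face $\{y_i=c\}$ for each $c\in\{0,1\}$.

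Surjectivity of $F:[0,1]^m\to[0,1]^m$ then follows from a degree argument: the straight-line homotopy $F_t(x):=(1-t)x+tF(x)$ keeps the cube invariant at every $t\in[0,1]$, because on each boundary face $\{x_i=c\}$ both endpoints have $i$-th component equal to $c$, so $F_t$ preserves that face, while the remaining components stay in $[0,1]$. Thus $F$ is homotopic to the identity through self-maps of $[0,1]^m$ and has Brouwer degree $1$. Combined with the nonnegative Jacobian $\det DF=\det D^2\psi\geq 0$, this forces $F([0,1]^m)=[0,1]^m$. Applying the area formula (or, equivalently, the Monge--Amp\`ere identity $\det D^2\psi(x)\,\mathrm{d}x=|D\psi(\,\cdot\,)|$ valid for $C^2$ convex $\psi$) then gives
\[
 \int_{[0,1]^m}\det D^2\psi(x|\theta)\,\mathrm{d}x\ =\ \bigl|F([0,1]^m)\bigr|\ =\ 1.
\]

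The main obstacle I anticipate is not the local Jacobian arithmetic, which is a direct computation from the explicit Fourier form of $\psi$, but the passage from containment $F([0,1]^m)\subseteq[0,1]^m$ to genuine surjectivity, together with neutralizing any non-injectivity of $F$. Both are handled by the degree/Monge--Amp\`ere package above, which uses only $D^2\psi\succeq 0$ rather than strict positive definiteness: if $F$ fails to be injective, it can only do so where $\det D^2\psi$ vanishes, so such points contribute nothing to either side of the change-of-variables identity.
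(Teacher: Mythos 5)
Your proof is correct, but it takes a genuinely different route from the paper's. The paper extends $\psi$ periodically and evenly to all of $\mbR^m$, shows the extension is strictly convex (the directional second derivative is a nonnegative, non-constant real-analytic function, hence positive off a discrete set) and co-finite, and invokes Rockafellar's Theorem~26.6 to conclude that $D\tilde\psi$ is an honest bijection of $\mbR^m$; it then descends to the cube by proving, inductively in the dimension, that each bounding hyperplane $\{x_j=b\}$ is mapped bijectively onto itself, so the slabs between opposite faces---and hence their intersection $[0,1]^m$---are preserved. You instead stay on the cube throughout: the Neumann identity $F_i=x_i$ on the faces together with $\Rd^2\psi/\Rd x_i^2\geq 0$ gives $F([0,1]^m)\subseteq[0,1]^m$ with face preservation, the boundary-preserving straight-line homotopy to the identity gives Brouwer degree $1$ and hence surjectivity, and the area formula handles the change of variables without requiring injectivity everywhere. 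Your version is more robust in that it uses only $D^2\psi\succeq 0$ and never the analyticity of the trigonometric potential, so it would survive non-analytic potentials such as the wavelet-type ones mentioned in Section~5; the paper's version buys the stronger conclusion that $D\psi$ is a genuine bijection (the Brenier map asserted in the introduction) and avoids degree theory and the area formula in favor of classical convex analysis. The one step you should make explicit is the claim that non-injectivity forces degeneracy: if $D\psi(x_1)=D\psi(x_2)$ with $x_1\neq x_2$, then monotonicity of the gradient of a convex function forces $(x_1-x_2)^{\top}D^2\psi(x)(x_1-x_2)=0$ along the whole segment, hence $\det D^2\psi=0$ there; applying the area formula to the set $\{\det D^2\psi=0\}$ shows its image is Lebesgue-null, so the multiplicity function equals $1$ almost everywhere on $[0,1]^m$ and the integral is exactly $1$.
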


SGM has sufficient flexibility for multivariate modeling
because the following theorem by \cite{caffarelli2000} holds.
To state the theorem, we prepare some notations.
Denote the $2m$ faces of $[0,1]^m$ by
$F_j^b=\{x\in[0,1]^m\mid x_j=b\}$
for $j\in\{1,\ldots,m\}$ and $b\in\{0,1\}$.
For a smooth function $\psi$ on $[0,1]^m$,
we consider a Neumann condition
\begin{eqnarray}
 && \frac{\Rd \psi(x)}{\Rd x_j}
  \ =\ b\ \ \mbox{for\ any}\ \ x\in F_j^b.
  \label{eqn:neumann}
\end{eqnarray}
It is easily confirmed that the function $\psi$ defined by (\ref{eqn:SGM})
satisfies the Neumann condition (\ref{eqn:neumann}).
Conversely, if $\psi(x)$ satisfies the Neumann condition (\ref{eqn:neumann}),
then it is expanded by an infinite cosine series in $L^2$ sense
(see e.g.\ page~300 of \cite{zygmund2002}).
In other words, the function (\ref{eqn:SGM}) approximates
any potential function satisfying (\ref{eqn:neumann})
if we make the frequency set $\cU$ large.
Now we describe the Caffarelli's theorem.
Here we put a slightly stronger assumption than his.

\begin{thm}[Theorem 5 of \cite{caffarelli2000}] \label{thm:caffarelli2000}
 Let $p(x)$ be a strictly positive and continuously differentiable function on $[0,1]^m$.
 Assume that $p(x)$ satisfies a Neumann condition
 $\Rd p(x)/\Rd x_j= 0$ for any $x\in F_j^b$.
 Then there exists a twice-differentiable convex function
 $\psi(x)$ such that (\ref{eqn:grad}) and (\ref{eqn:neumann}) hold.
\end{thm}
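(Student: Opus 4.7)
The plan is to construct $\psi$ via Brenier's theorem and then invoke Caffarelli's regularity theory to upgrade the resulting Alexandrov solution to a classical $C^2$ solution. First I would apply Brenier's theorem (see e.g.\ \cite{villani2003}) to the source measure $p(x)\,\mathrm{d}x$ on $[0,1]^m$ and the uniform target measure on $[0,1]^m$: this produces a convex function $\psi$ on $[0,1]^m$, unique up to an additive constant, whose gradient $D\psi$ pushes the source forward to the target. A standard change-of-variables computation, together with strict positivity of $p$, then yields the Monge-Amp\`ere identity $\det D^2\psi = p$ in the Alexandrov sense.

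Next I would verify the Neumann condition (\ref{eqn:neumann}). Since $D\psi$ is an optimal map into the convex set $[0,1]^m$ and $\psi$ is convex, the face $F_j^b$ of the source cube must be sent into the corresponding face $\{y_j=b\}$ of the target (a supporting-hyperplane argument at the boundary of the convex target). This forces $\Rd_j\psi(x)=b$ for $x\in F_j^b$, which is exactly (\ref{eqn:neumann}). Note that this step is a soft consequence of the convexity of the target and does not yet use the assumption on $p$.

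The main work, and the main obstacle, is upgrading $\psi$ from a convex Alexandrov solution to a twice-differentiable classical solution on the whole of $[0,1]^m$. Interior $C^{2,\alpha}$ regularity is immediate from Caffarelli's interior theory since $p$ is strictly positive and continuous. Regularity up to the boundary is the delicate part, and this is where the strengthened hypothesis $\Rd p/\Rd x_j=0$ on $F_j^b$ plays its role as a compatibility condition: reflecting $p$ across the face $F_j^b$ via $x_j\mapsto -x_j$ produces a $C^1$ extension on a slightly larger cube, and the corresponding reflected Brenier potential satisfies the reflected Monge-Amp\`ere equation on the interior of that larger cube. Iterating the reflection over the $2m$ faces (and, by successive reflection, over edges and corners) reduces every boundary point to an interior point of a larger reflected cube, to which the interior theory applies. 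The point that one must cite from Caffarelli (2000) rather than re-derive is precisely the claim that this reflected potential is again a convex Alexandrov solution with matching first derivatives across the face; granting that, piecing the reflections together yields $\psi\in C^2([0,1]^m)$, completing the proof.
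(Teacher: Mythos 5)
The paper does not actually prove this statement: it is quoted (with a slightly strengthened smoothness hypothesis) directly from Theorem~5 of \cite{caffarelli2000}, and the Appendix contains no argument for it. So there is no in-paper proof to compare against; the relevant question is only whether your sketch is a sound outline of the cited result. In broad strokes it is the standard strategy (Brenier map to the uniform measure, interior Caffarelli regularity, reflection across the faces to handle the boundary), and you correctly identify that the reflection-compatibility step is the part that must be taken from Caffarelli rather than re-derived.

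Two cautions, though. First, your claim that the Neumann condition (\ref{eqn:neumann}) is a ``soft consequence'' of the convexity of the target is overstated: that the Brenier map sends each face $F_j^b$ into the corresponding face $\{y_j=b\}$ is essentially the monotonicity/face-preservation content of Caffarelli's theorem itself, not something that falls out of a supporting-hyperplane remark (already knowing that $D\psi$ extends to a homeomorphism of the closed cube requires the boundary regularity you are trying to establish). Second, the reflection argument as you phrase it is circular: the even reflection of $\psi$ across $F_j^b$ is $C^1$ and convex only if $\partial_j\psi=b$ on that face, which is the Neumann condition you have not yet proved. The non-circular route is to extend $p$ evenly in each coordinate (this is where $\partial p/\partial x_j=0$ on $F_j^b$ gives a $C^1$ periodic density), solve the periodic Monge--Amp\`ere problem on the torus $\mathbb{R}^m/2\mathbb{Z}^m$, and deduce from uniqueness and the reflection symmetries that the solution is even in each variable; the Neumann condition and the restriction to $[0,1]^m$ then follow. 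Since you ultimately defer the decisive step to \cite{caffarelli2000}, your proposal is acceptable as a proof-by-citation --- which is exactly how the paper treats the statement --- but steps two and three should not be presented as independent derivations.
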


Since the conditions for $p(x)$ in the above theorem are
differentiability and a boundary condition, we can construct sufficiently many
statistical models by SGM.
In the following subsection, we enumerate various examples of SGM.
In Section~\ref{section:discussion},
we discuss removal of the boundary condition for $p(x)$
by removing the twice-differentiability condition for $\psi(x)$.

For the one-dimensional case ($m=1$),
SGM becomes a mixture model
as will be explained in the following subsection.
For the multi-dimensional case ($m>1$),
SGM is not a mixture model
except for essentially one-dimensional case.

\begin{lem} \label{lem:SGM-is-not}
 SGM is not a mixture model
 unless there exists some $i\in\{1,\ldots,m\}$
 such that $\cU\subset\mbZ_i$,
 where $\mbZ_i=\{u\in\mbZ_{\geq 0}^m\mid u_j=0\ \forall j\neq i\}$.
\end{lem}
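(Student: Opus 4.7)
The proof hinges on interpreting a ``mixture model'' as a family of densities whose image $\{p(\cdot\mid\theta):\theta\in\Theta\}$ is a convex subset of $L^1([0,1]^m)$; this is consistent with the preceding remark, since for $m=1$ we have $p(x\mid\theta)=1+\sum_u\theta_u u^2\cos(\pi ux)$, which is affine in $\theta$ and so has convex image. My plan is to prove that this image is convex precisely when $\cU\subset\mbZ_i$ for some $i$. Writing $D^2\psi(x\mid\theta)=I+\sum_{u\in\cU}\theta_u M_u(x)$, a direct computation gives $(M_u)_{ii}(x)=u_i^2\prod_j\cos(\pi u_jx_j)$ and $(M_u)_{ik}(x)=-u_iu_k\sin(\pi u_ix_i)\sin(\pi u_kx_k)\prod_{j\ne i,k}\cos(\pi u_jx_j)$ for $i\ne k$.

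For the easy direction, if all $u\in\cU$ are supported on a single coordinate $i$, then every off-diagonal entry of $\sum_u\theta_u M_u$ vanishes, as does every diagonal entry except the $(i,i)$ one, leaving $p(x\mid\theta)=1+\sum_u\theta_u u_i^2\cos(\pi u_ix_i)$ affine in $\theta$ with convex image. For the converse, suppose the image is convex. Then for every small $\theta\in\Theta$, the line segment joining $p(\cdot\mid 0)=1$ and $p(\cdot\mid\theta)$ must lie in the image; differentiating a parametrization of this line at the uniform end yields $L\,\dot\eta(0)=p(\cdot\mid\theta)-1$, where $L:\eta\mapsto\sum_u\eta_u|u|^2\prod_j\cos(\pi u_jx_j)$ is the linearization of $\eta\mapsto p(\cdot\mid\eta)$ at $\eta=0$. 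Since the image of $L$ is the finite-dimensional cosine-product span indexed by $\cU$, convexity forces the full nonlinear part $p(\cdot\mid\theta)-1-L(\theta)$ to lie in that same span for every $\theta\in\Theta$.

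To finish the converse I would exhibit, when $\cU\not\subset\mbZ_i$ for any $i$, a direction $\theta$ whose quadratic part $\sigma_2(M(\theta))$ escapes the cosine-product span of $\cU$. If some $u\in\cU$ has two distinct nonzero coordinates $u_i,u_k$, then $\sigma_2(M_u)$ contains, via the product-to-sum identity $\cos^2 a\cos^2 b-\sin^2 a\sin^2 b=\cos(a+b)\cos(a-b)$, cosines at the doubled frequencies $2u_i e_i$ and $2u_k e_k$ along single axes; if instead there exist $u,u'\in\cU$ whose unique nonzero coordinates lie in different positions, then the bilinear term $\theta_u\theta_{u'}$ in the determinant produces a cosine at the new frequency $u+u'$. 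The main obstacle is the final combinatorial check: for cleverly overlapping $\cU$ some of the generated frequencies could coincidentally land in $\cU$, so one must select a minimal pair in $\cU$ violating the single-coordinate condition and, if necessary, inspect the cubic or higher principal minors of $\det(D^2\psi)$ to locate a frequency that genuinely escapes. With that in hand, Fourier uniqueness of the cosine expansion on $[0,1]^m$ (valid because of the Neumann condition \eqref{eqn:neumann} ensured before Theorem~\ref{thm:caffarelli2000}) closes the argument.
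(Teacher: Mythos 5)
Your converse direction has a genuine gap, and it is exactly the step you flag yourself as ``the main obstacle.'' Reducing the problem to showing that the nonlinear part of $\theta\mapsto p(\cdot|\theta)$ escapes the span of $\{\prod_j\cos(\pi u_jx_j)\}_{u\in\cU}$ is a reasonable reduction, but the escape itself is the whole content of the lemma, and your argument does not establish it: the doubled frequencies $2u_ie_i$, $2u_ke_k$ and the sum frequencies $u+u'$ can perfectly well lie in $\cU$ (e.g.\ $\cU=\{(1,1),(2,0),(0,2)\}$ already defeats the quadratic obstruction from $u=(1,1)$ alone), different terms of the determinant expansion can contribute to the same frequency and cancel, and the fallback of ``inspecting cubic or higher principal minors'' is not carried out. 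As written, the proof is incomplete for general $\cU$. There is also a soft point at the start: passing from convexity of the image to $L\,\dot\eta(0)=p(\cdot|\theta)-1$ needs local identifiability and differentiability of the reparametrization $t\mapsto\eta(t)$, which you assert but do not justify.

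The paper avoids all of this Fourier bookkeeping. It takes ``mixture model'' to mean that the density is affine in the parameter, i.e.\ all second derivatives $\Rd^2p/\Rd\theta_u\Rd\theta_v$ vanish, and then simply evaluates one such derivative at the single point $\theta=0$, $x=0$. There $A_u:=\{D^2\psi\}^{-1}\Rd_{\theta_u}(D^2\psi)=\mathrm{diag}(u_1^2,\dots,u_m^2)$, so
\[
\left.\frac{\Rd^2p(x|\theta)}{\Rd\theta_u\Rd\theta_v}\right|_{\theta=0,\,x=0}
=\mathop{\rm tr}A_u\mathop{\rm tr}A_v-\mathop{\rm tr}[A_uA_v]
=\|u\|^2\|v\|^2-\sum_iu_i^2v_i^2=\sum_i\sum_{j\neq i}u_i^2v_j^2,
\]
which is strictly positive whenever $u,v\in\cU$ (possibly equal) satisfy $|\sigma(u)\cup\sigma(v)|\ge 2$ --- and such a pair exists precisely when $\cU\not\subset\mbZ_i$ for every $i$. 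A pointwise evaluation of a second derivative cannot suffer from the frequency-collision problem, so no combinatorial case analysis is needed. If you want to salvage your route, the cleanest fix is to replace the frequency-escape argument by this pointwise second-derivative test; otherwise you must actually prove, for arbitrary $\cU$, that some Fourier coefficient outside the span survives all cancellations, which is substantially harder than the lemma deserves.
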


We use the following mixture model as a reference.

\begin{defn}[MixM] \label{defn:mixture}
 Let $\cU$ be a finite subset of $\mbZ_{\geq 0}^m$.
 We define a structural mixture model (referred to as {\it MixM}) by
 \begin{equation}
  \tilde{p}(x|\theta)\ =\ 
  1 + \sum_{u\in\cU} \theta_u\|u\|^2\prod_{j=1}^m \cos(\pi u_jx_j),
  \label{eqn:mixture}
 \end{equation}
 where $x=(x_j)\in[0,1]^m$, $\theta=(\theta_u)\in \mbR^{\cU}$
 and $\|u\|^2=\sum_{j=1}^mu_j^2$.
 The feasible region is $\tilde{\Theta}:=\{\theta\in\mbR^{\cU}\mid \tilde{p}(x|\theta)\geq 0\ \forall x\in [0,1]^m\}$.
 \qed
\end{defn}

In the following lemma,
we prove that SGM and MixM
have a common score function
at the origin $\theta=0$ of the parameter space.
The Fisher information matrix at the origin
is also calculated.

\begin{lem} \label{lem:SGM-and-MixM}
 The score vector at the origin $\theta=0$ of both SGM and MixM
 is equal to
 $(\|u\|^2\prod_{j=1}^m\cos(\pi u_jx_j))_{u\in\cU}$.
 The Fisher information matrix $(J_{uv})_{u,v\in\cU}$ at the origin $\theta=0$ of both the models
 is given by
 \[
  J_{uv}
  \ =\ \frac{\|u\|^41_{\{u=v\}}}{2^{|\sigma(u)|}},
 \]
 where $\sigma(u)=\{j\in\{1,\ldots,m\}\mid u_j>0\}$.
 In particular, $J_{uv}$ is diagonal.
\end{lem}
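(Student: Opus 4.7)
The plan is to exploit the fact that both SGM and MixM reduce to the uniform density at $\theta=0$, so that expectations become plain Lebesgue integrals on $[0,1]^m$, and then to verify that the two models share a common score at the origin, reducing the Fisher information computation to a one-dimensional orthogonality calculation. For MixM, $\tilde{p}(x|0)=1$ is immediate from (\ref{eqn:mixture}). For SGM, $\psi(x|0)=\tfrac{1}{2}x^\top x$, whence $D^2\psi(x|0)=I_m$ and $p(x|0)=\det I_m=1$.

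Next I would compute the scores. For MixM, $(\Rd/\Rd\theta_u)\log\tilde{p}$ at $\theta=0$ is simply $\|u\|^2\prod_j\cos(\pi u_j x_j)$, because the denominator in the logarithmic derivative is $1$. For SGM, I would invoke Jacobi's formula, $(\Rd/\Rd\theta_u)\log\det(D^2\psi)=\mathrm{tr}\bigl((D^2\psi)^{-1}(\Rd/\Rd\theta_u)D^2\psi\bigr)$; at $\theta=0$ the inverse is $I_m$, so the score reduces to $\sum_i (\Rd^2/\Rd x_i^2)(\Rd\psi/\Rd\theta_u)$. Differentiating $-\pi^{-2}\prod_j\cos(\pi u_j x_j)$ twice in $x_i$ brings out a factor $u_i^2$ which cancels the $\pi^{-2}$, leaving $u_i^2\prod_j\cos(\pi u_j x_j)$. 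Summing in $i$ yields $\|u\|^2\prod_j\cos(\pi u_j x_j)$, matching the MixM score.

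Since the scores and the base densities agree, the Fisher information of both models at $\theta=0$ equals
\[
J_{uv}=\|u\|^2\|v\|^2\prod_{j=1}^m\int_0^1\cos(\pi u_j x_j)\cos(\pi v_j x_j)\,\mathrm{d}x_j.
\]
Applying the product-to-sum identity together with $\int_0^1\cos(\pi n x)\,\mathrm{d}x=\mathbf{1}_{\{n=0\}}$ for non-negative integers $n$, each one-dimensional integral equals $1$ when $u_j=v_j=0$, equals $1/2$ when $u_j=v_j>0$, and vanishes otherwise (the cross-term $\cos(\pi(u_j+v_j)x)$ contributes only when $u_j+v_j=0$, which on $\mbZ_{\geq 0}$ forces $u_j=v_j=0$). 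Hence for $u\neq v$ some factor is zero and $J_{uv}=0$, while for $u=v$ the product equals $2^{-|\sigma(u)|}$, giving the stated formula.

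I do not anticipate any serious obstacle. The mildly subtle step is the SGM score, but since $D^2\psi(x|0)=I_m$, Jacobi's formula collapses to the plain trace of $(\Rd/\Rd\theta_u)D^2\psi$, and termwise differentiation is justified because $\psi$ is a finite trigonometric polynomial in $x$ that depends linearly on $\theta$.
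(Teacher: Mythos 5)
Your proposal is correct and follows essentially the same route as the paper: compute the common score $\|u\|^2\prod_j\cos(\pi u_jx_j)$ at $\theta=0$ (the paper simply states this as a direct calculation, which your Jacobi-formula step makes explicit), then reduce $J_{uv}$ to the product of one-dimensional cosine orthogonality integrals with values $1$, $1/2$, or $0$. No gaps.
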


The Fisher information matrix $J$ at the origin is useful if
we deal with the testing of hypothesis $\theta=0$.
Under this hypothesis, the maximum likelihood estimator $\hat{\theta}$
is approximated by a Gaussian random vector
with mean $0$ and variance $(nJ)^{-1}$.
In Section~\ref{section:numerical},
we will use the scaled maximum likelihood estimator
$J^{1/2}\hat{\theta}$
to detect which components of $\hat{\theta}$
are significant.
A method of computation for the maximum likelihood estimator
is given in Section~\ref{section:mle}.
In general, it seems difficult to calculate the
Fisher information at the other points $\theta\neq 0$.
Exceptional cases will be stated
in the following examples.

\subsection{Examples} \label{subsection:examples}

We enumerate examples of SGM.
We mainly compare SGM with MixM defined in Definition~\ref{defn:mixture}.
For SGM, the following sufficient condition for feasibility of $\theta$ is useful to deal with the examples.
In Theorem~\ref{thm:little},
we will show that $\theta$ is feasible if
\begin{equation}
 1-\sum_{u\in\cU}|\theta_u|u_j^2\ \geq\ 0
  \label{eqn:feasible-intro}
\end{equation}
for any $j=1,\ldots,m$.
This condition is also necessary if, for example,
$\cU$ is a one-element set (see Theorem~\ref{thm:little} for details).

\begin{exmp}[1-dimensional case]
 If $m=1$, then the probability density of SGM is
 given by the Fourier series
 \[
  p(x_1|\theta)
 \ =\ 1+\sum_{u\in\cU}\theta_u u^2\cos(\pi ux_1).
 \]
 This coincides with MixM (Definition~\ref{defn:mixture}).
 The model is considered as a particular case of
 the circular model proposed by \cite{fernandez2004}.
 If $\cU=\{u\}$ with some $u\in\mbZ_{>0}$,
 then the Fisher information $J_{uu}(\theta)$ is explicitly expressed
 for any feasible $\theta=\theta_u$.
 In fact,
 \begin{equation}
  J_{uu}(\theta)
   \ =\ \frac{1-\sqrt{1-\theta^2u^4}}{\theta^2\sqrt{1-\theta^2u^4}}.
   \label{eqn:Fisher-explicit-1}
 \end{equation}
 The proof is given in Appendix.
 \qed
\end{exmp}

\begin{exmp}[Independence]
 Let $m=2$ and 
 \[
  \cU=\{(u_1,0)\mid u_1\in\cU_1\}\cup\{(0,u_2)\mid u_2\in\cU_2\},
 \]
 where $\cU_i$ ($i=1,2$) is a finite subset of $\mbZ_{\geq 0}$.
 Then SGM becomes an independent model
 \[
  p(x_1,x_2|\theta)
 \ =\ \left(1+\sum_{u_1\in\cU_1}\theta_{(u_1,0)}u_1^2\cos(\pi u_1x_1)\right)
 \left(1+\sum_{u_2\in\cU_2}\theta_{(0,u_2)}u_2^2\cos(\pi u_2x_2)\right).
 \]
 Independence of higher-dimensional variables is similarly described.
 On the other hand, if we consider MixM
 \[
  \tilde{p}(x_1,x_2|\theta)
 \ =\ 1+\sum_{u_1\in\cU_1}\theta_{(u_1,0)}u_1^2\cos(\pi u_1x_1)
 +\sum_{u_2\in\cU_2}\theta_{(0,u_2)}u_2^2\cos(\pi u_2x_2),
 \]
 then $x_1$ and $x_2$ are not independent except for trivial cases.
 \qed
\end{exmp}

\begin{exmp}[Correlation]
 Let $m=2$ and $\cU=\{(1,1)\}$.
 Then a pair $(X_1,X_2)$ drawn from $p(x_1,x_2|\theta)$
 has  positive or negative correlation
 if $\theta_{(1,1)}>0$ or $<0$, respectively (see Figure~\ref{fig:correlation-dens}).
 We confirm this observation by explicit calculation.
 We denote $\theta=\theta_{(1,1)}$,
 $c(\xi)=\cos(\pi \xi)$ and $s(\xi)=\sin(\pi \xi)$
 for simplicity.
 The density is
 \begin{eqnarray*}
  p(x_1,x_2|\theta)
   &=& \det\left(
  \begin{array}{cc}
   1+\theta c(x_1)c(x_2)& -\theta s(x_1)s(x_2) \\
   -\theta s(x_1)s(x_2)& 1+\theta c(x_1)c(x_2)
  \end{array}
 \right)
  \\
%  &=&
%  (1+\theta c(x_1+x_2))(1+\theta c(x_1-x_2))
%  \\
   &=& 1 + 2\theta c(x_1)c(x_2)
    + \frac{\theta^2}{2}(c(2x_1)+c(2x_2)).
 \end{eqnarray*}
 By the condition (\ref{eqn:feasible-intro}),
 the feasible region for $\theta$ is $[-1,1]$.
 The marginal density of $X_i$ ($i=1,2$) is exactly calculated as
 \[
  p(x_i|\theta)\ =\ 1+\frac{\theta^2}{2}c(2x_i).
 \]
 The mean and variance of $X_i$ ($i=1,2$)
 are $1/2$ and $(1/12)+\theta^2/(4\pi^2)$, respectively.
 The correlation is
 \[
 \frac{\mathrm{Cov}[X_1,X_2]}{\sqrt{\mathrm{V}[X_1]\mathrm{V}[X_2]}}
 \ =\ \frac{8\theta/\pi^4}{(1/12)+\theta^2/(4\pi^2)}
 \ =\ \frac{96\theta/\pi^4}{1+3\theta^2/\pi^2}.
 \]
 The maximum correlation over $\theta\in[-1,1]$ is $96/(\pi^4+3\pi^2)\simeq 0.7558$ at $\theta=1$.
 In contrast, if we consider MixM
 \[
  \tilde{p}(x_1,x_2|\theta)\ =\ 1+2\theta c(x_1)c(x_2),
 \]
 then the feasible region (i.e.\ the set of $\theta$ that assures $\tilde{p}(x_1,x_2|\theta)\geq 0$)
 is $|\theta|\leq 1/2$.
 The correlation is $96\theta/\pi^4$
 and its maximum value is $48/\pi^4\simeq 0.4928$ at $\theta=1/2$.
 Thus SGM can describe a distribution with higher correlation than MixM.
 The Fisher information $J_{uu}(\theta)$
 is explicitly expressed for any feasible $\theta$,
 where $u=(1,1)$.
 The formula is
 \begin{equation}
  J_{uu}(\theta)\ =\ \frac{2(1-\sqrt{1-\theta^2})}{\theta^2\sqrt{1-\theta^2}}.
   \label{eqn:Fisher-explicit-2}
 \end{equation}
 The proof is given in Appendix.
 \begin{figure}[htbp]
  \begin{center}
   \begin{tabular}{cc}
   \includegraphics[width=7cm,clip]{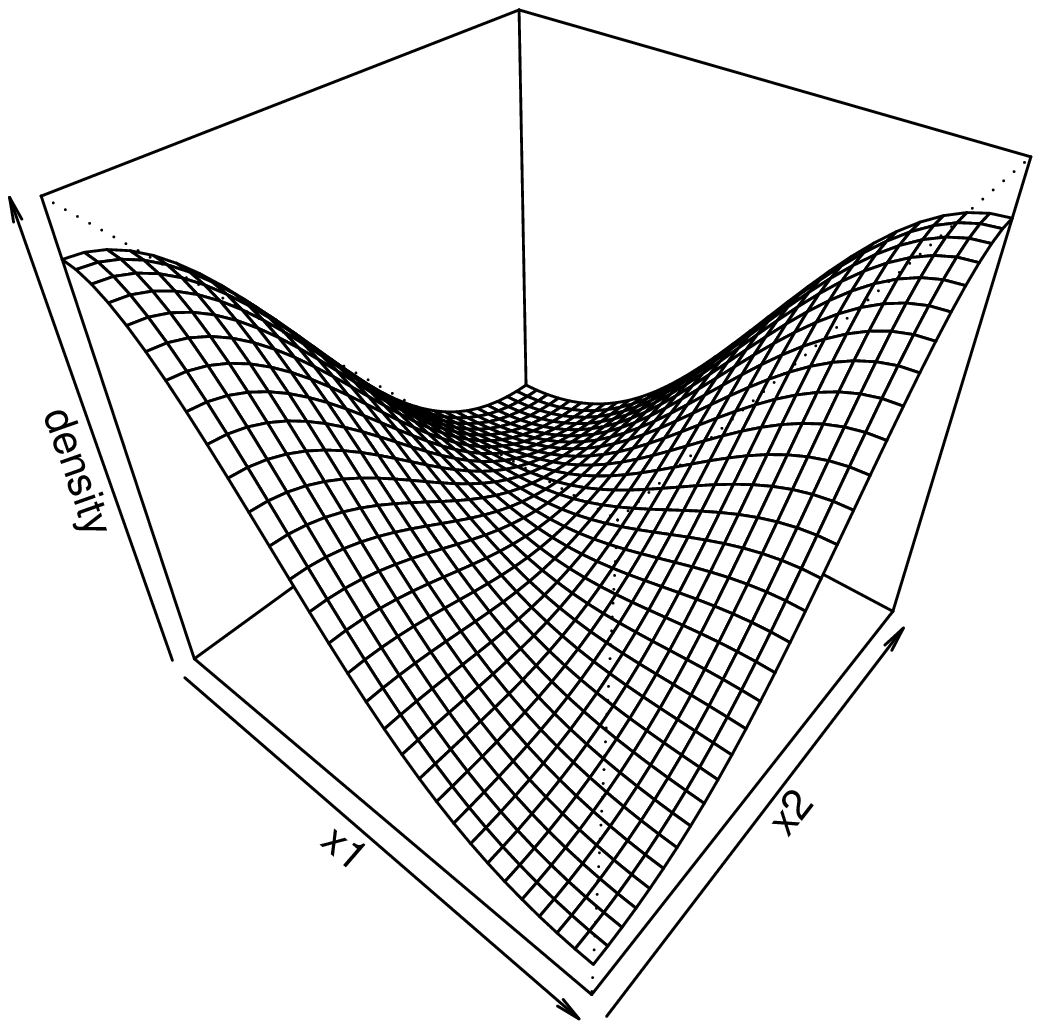}
    &
    \includegraphics[width=7cm,clip]{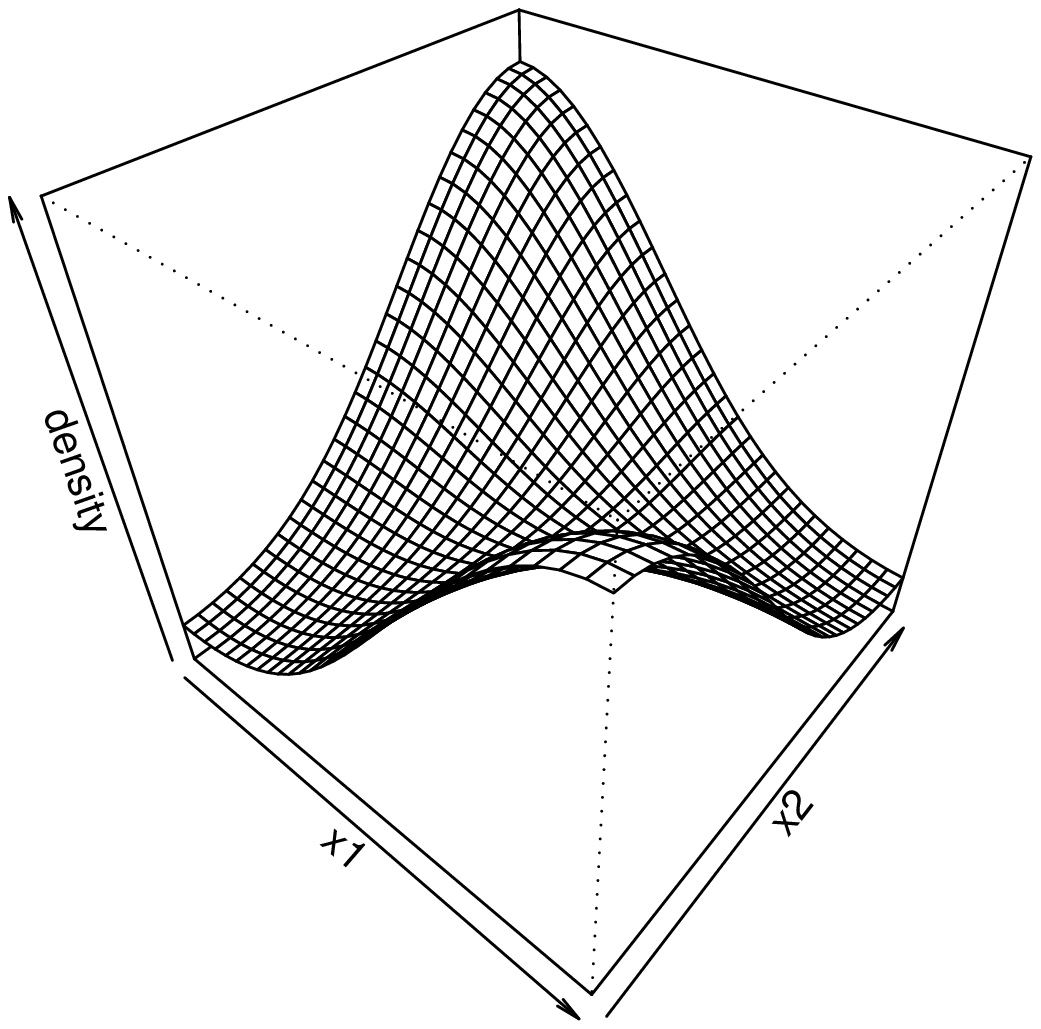}
    \\
    (a) $\theta=0.5$.
    &
    (b) $\theta=-0.5$.
   \end{tabular}
  \end{center}
  \caption{The probability density $p(x|\theta)$ for $\cU=\{(1,1)\}$ and
  $\theta=\theta_{(1,1)}=\pm 0.5$.
  The correlation coefficient is about $\pm 0.458$ for $\theta=\pm 0.5$,
  respectively.}
  \label{fig:correlation-dens}
 \end{figure}
 \qed
\end{exmp}

\begin{exmp}[Heteroscedasticity] \label{exmp:hetero}
 Let $m=2$ and $\cU=\{(1,2)\}$.
 Then a pair $(X_1,X_2)$ drawn from $p(x_1,x_2|\theta)$
 has the following property:
 the conditional mean of $X_2$ given $X_1$ does not depend on $X_1$
 but the conditional variance does (see Figure~\ref{fig:hetero-dens}).
 In other words, $X_2$ has heteroscedasticity
 in terms of regression analysis.
 We confirm this fact.
 The joint density is
\begin{eqnarray*}
 p(x_1,x_2|\theta)
 &=& \det\left(
 \begin{array}{cc}
  1+\theta c(x_1)c(2x_2)& -2\theta s(x_1)s(2x_2) \\
  -2\theta s(x_1)s(2x_2)& 1+4\theta c(x_1)c(2x_2) \\
 \end{array}
 \right)
 \\
 &=& 1 +5\theta c(x_1)c(2x_2)
  + 2\theta^2 c(2x_1)
  +2\theta^2c(4x_2)
\end{eqnarray*}
 where we put $c(\xi)=\cos(\pi\xi)$, $s(\xi)=\sin(\pi\xi)$,
 and $\theta=\theta_{(1,2)}$.
 The marginal density of $X_1$ is
 $p(x_1)=1+2\theta^2 c(2x_1)$.
 The conditional density of $X_2$ given $X_1$ is
\begin{eqnarray*}
  p(x_2|x_1,\theta)
 &=& 1+\frac{5\theta c(x_1)c(2x_2)
  +2\theta^2c(4x_2)}
 {1+2\theta^2 c(2x_1)}
\end{eqnarray*}
 The conditional mean of $X_2$ given $X_1$ is exactly $1/2$,
 and therefore the correlation between $X_1$ and $X_2$ is zero.
 However, the conditional variance of $X_2$ given $X_1$ is
 not constant:
 \[
  \int_0^1 (x_2-1/2)^2p(x_2|x_1,\theta)\mathrm{d}x_2
 \ =\ \frac{1}{12}+
 \frac{10\theta c(x_1)+\theta^2}
 {4\pi^2\{1+2\theta^2c(2x_1)\}}.
 \]
 In order to measure the dependency of $X_1$, 
 let us consider the quantity
 \begin{eqnarray*}
  \beta_{122}(\theta) &=&
 \frac{\mathrm{E}[(X_1-1/2)(X_2-1/2)^2]}
 {\{\mathrm{V}[X_1]\}^{1/2}\mathrm{V}[X_2]}.
  \\
 &=& \frac{-5\theta/\pi^4}{\{(1/12)+\theta^2/\pi^2\}^{1/2}\{(1/12)+\theta^2/(4\pi^2)\}}
 \end{eqnarray*}
 The maximum value of $\beta_{122}(\theta)$ over the feasible region $\theta\in[-1/4,1/4]$
 is $\beta_{122}(-1/4)\simeq 0.5047$.
 In contrast, for MixM
 $\tilde{p}(x_1,x_2|\theta)=1+5\theta c(x_1)c(2x_2)$,
 the maximum of $\beta_{122}(\theta)$ over the feasible region $\theta\in[-1/5,1/5]$
 is $\tilde{\beta}_{122}(-1/5)\simeq 0.4267$.
 Thus SGM can describe more heteroscedastic distributions than
 MixM.
 The heteroscedasticity appears in regression analysis,
 where explanatory and response variables are {\it a priori} selected.
 Remark that our model does not need a priori selection of variables.
 \qed
 \begin{figure}[htbp]
  \begin{center}
   \begin{tabular}{c}
   \includegraphics[width=7cm,clip]{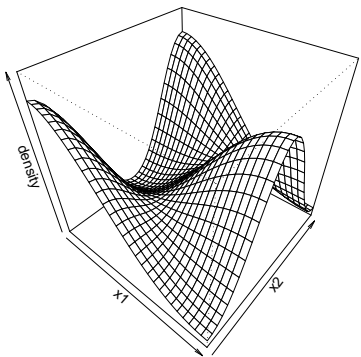}
   \end{tabular}
  \caption{The probability density for $\cU=\{(1,2)\}$ and $\theta=0.2$.
  The conditional density $p(x_2|x_1)$ is unimodal
  if $x_1$ is close to $1$, and bimodal if $x_1$ is close to $0$.}
  \label{fig:hetero-dens}
  \end{center}
 \end{figure}
\end{exmp}

\begin{exmp}[three-dimensional interaction]
 Let $m=3$ and $\cU=\{(1,1,1)\}$.
 Then the triplet $(X_1,X_2,X_3)$ has the three-dimensional
 interaction although the marginal two-dimensional correlation for any pair vanishes.
 We confirm this.
 The joint probability density is
\begin{eqnarray*}
  p(x_1,x_2,x_3|\theta)
 &=& 1+3\theta c_1c_2c_3+3\theta^2c_1^2c_2^2c_3^2
 +\theta^3c_1^3c_2^3c_3^3
 \\
 && -2\theta^3c_1s_1^2c_2s_2^2c_3s_3^2
  -(1+\theta c_1c_2c_3)\theta^2(c_1^2s_2^2s_3^2+s_1^2c_2^2s_3^2+s_1^2s_2^2c_3^2),
\end{eqnarray*}
 where $c_i=\cos(\pi x_i)$ and $s_i=\sin(\pi x_i)$ for $i=1,2,3$.
 The density is symmetric with respect to permutation of axes.
 The feasible region is $|\theta|\leq 1$ by (\ref{eqn:feasible-intro}).
 The 2-dimensional and 1-dimensional marginal densities are
$p(x_1,x_2|\theta)=1+\theta^2(4c_1^2c_2^2-1)/2$
 and 
 $p(x_1|\theta)=1+\theta^2(2c_1^2-1)/2$,
 respectively.
 In particular, the mean of $X_i$ is $1/2$
 and the correlation of $X_i$ and $X_j$ ($i\neq j$) is zero.
 However, there exists three-dimensional interaction between $(X_1,X_2,X_3)$.
 We calculate
 \[
 \beta_{123}(\theta)\ :=\ 
 \frac{\mathrm{E}[(X_1-\mathrm{E}X_1)(X_2-\mathrm{E}X_2)(X_3-\mathrm{E}X_3)]}
 {\sqrt{\mathrm{V}[X_1]\mathrm{V}[X_2]\mathrm{V}[X_3]}}.
 \]
 The result is
 \[
 \beta_{123}(\theta)
 \ =\ \frac{-24\theta/\pi^6-1944\theta^3/729\pi^6}{(1/12+\theta^2/(4\pi^2))^{3/2}}.
 \]
 The maximum value of $\beta_{123}(\theta)$ over the feasible region $|\theta|\leq 1$
 is $\beta_{123}(-1)\simeq 0.7743$.
 In contrast, for MixM $\tilde{p}(x_1,x_2,x_3|\theta)=1+3\theta c_1c_2c_3$,
 we have $\beta_{123}(\theta)=-288\sqrt{12}\theta/\pi^6$.
 Its maximum value over the feasible region $|\theta|\leq 1/3$
 is about $0.3459$ at $\theta=-1/3$.
 \qed
\end{exmp}

\begin{exmp}[Approximately conditional independence]
 Let $m=3$ and $(X_1,X_2,X_3)$ be drawn from a probability density $p(x_1,x_2,x_3)$.
 In general, conditional independence of $X_1$ and $X_2$
 given $X_3$ is described by 
 $p(x_1,x_2,x_3)=p(x_3)p(x_1|x_3)p(x_2|x_3)$
 or, equivalently, the conditional mutual information
 \[
  I_{12|3}
  \ =\ \int p(x_1,x_2,x_3)\log \frac{p(x_1,x_2|x_3)}{p(x_1|x_3)p(x_2|x_3)}
  \mathrm{d}x_1\mathrm{d}x_2\mathrm{d}x_3
 \]
 vanishes.
 A log-linear model $\exp(f(x_1,x_3)+g(x_2,x_3))$
 satisfies this condition.
 Although SGM does not represent
 any conditional-independence model,
 we can construct an approximately conditional-independence model.
 Let $m=3$ and $\cU=\{(1,0,1),(0,1,1)\}$.
 Then, by putting $c_i=\cos(\pi x_i)$, $s_i=\sin(\pi x_i)$, $\theta=\theta_{(1,0,1)}$
 and $\phi=\theta_{(0,1,1)}$, we have
\begin{eqnarray*}
 \lefteqn{p(x_1,x_2,x_3|\theta,\phi)}
  \\
  &=& \det
  \left(
   \begin{array}{ccc}
   1 + \theta c_1c_3 & 0 & -\theta s_1s_3 \\
   0 & 1 + \phi c_2c_3 & -\phi s_2s_3 \\
   -\theta s_1s_3 & -\phi s_2s_3 & 1+\theta c_1c_3+\phi c_2c_3
   \end{array}
  \right)
  \\
  &=& 1+2\theta c_1c_3+2\phi c_2c_3+3\theta\phi c_1c_2c_3^2
  +\theta^2(c_1^2c_3^2-s_1^2s_3^2)+\phi^2(c_2^2c_3^2-s_2^2s_3^2)
  \\
  && +\theta^2\phi (c_1^2c_3^2-s_1^2s_3^2)c_2c_3+\theta\phi^2(c_2^2c_3^2-s_2^2s_3^2)c_1c_3
\end{eqnarray*}
 Now assume that $\epsilon:=\max(|\theta|,|\phi|)$ is close to zero.
 Then the conditional mutual information is,
 after tedious calculations,
 \[
 I_{12|3}
 \ =\ \frac{3}{16}\theta^2\phi^2 + {\rm O}(\epsilon^5).
 \]
 On the other hand,
 MixM
 $\tilde{p}(x_1,x_2,x_3|\theta,\phi)=1+2\theta c_1c_3+2\phi c_2c_3$
 has the conditional mutual information $I_{12|3}=(3/4)\theta^2\phi^2+{\rm O}(\epsilon^5)$.
 The leading term is 4 times larger than that of SGM.
 \qed
\end{exmp}

We summarize the above examples in Table~\ref{table:exmp}.

\begin{table}[htbp]
 \caption{Summary of the examples.
 For each example, the characteristics of SGM and MixM are compared.}
 \label{table:exmp}
 \begin{center}
  \begin{tabular}{|c|l|c|l||c|c|}
   \hline
   \#& \multicolumn{1}{|c|}{Model name} & $m$ & 
        \multicolumn{1}{|c||}{Characteristic} & SGM & MixM \\
   \hline
   1& 1-dim. & 1 & (SGM$=$MixM) & --- & --- \\
   2& independence & 2 & `is independent' & TRUE & FALSE \\
   3& correlation & 2& maximum correlation & 0.7558 & 0.4928 \\
   4& heteroscedasticity & 2 & maximum $\beta_{122}$ & 0.5047 & 0.4267 \\
   5& 3-dim.~interaction & 3 & maximum $\beta_{123}$ & 0.7743 & 0.3459 \\
   6& conditional independence & 3 & leading coefficient of $I_{12|3}$ & $3/16$ & $3/4$ \\
   \hline
  \end{tabular}
 \end{center}
\end{table}

\begin{exmp} \label{exmp:complicated}
 We can construct more complicated densities
 by combining the preceding ones.
 For example, let $m=3$ and
 $\cU=\{(1,2,0),(0,1,1),(1,1,1)\}$.
 Let the corresponding parameter vector be
 $\theta=(0.1,0.3,0.2)$.
 The vector $\theta$ is feasible since (\ref{eqn:feasible-intro})
 is satisfied.
 The marginal and conditional 2-dimensional densities are illustrated in Figure~\ref{fig:high-1}.
 \qed
\end{exmp}

\begin{figure}[htbp]
 \begin{tabular}{ccc}
  \includegraphics[width=4.5cm]{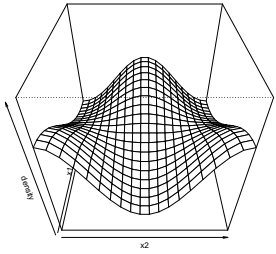}
  & \includegraphics[width=4.5cm]{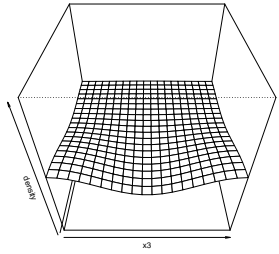}
  & \includegraphics[width=4.5cm]{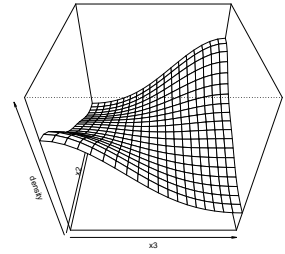}
  \\
  (a) $p(x_1,x_2)$ & (b) $p(x_1,x_3)$ & (c) $p(x_2,x_3)$
  \\
  \includegraphics[width=4.5cm]{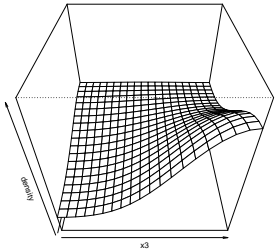}
  & \includegraphics[width=4.5cm]{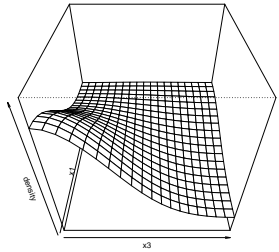}
  &
  \\
  (d) $p(x_1,x_3|x_2=3/4)$ & (e) $p(x_1,x_3|x_2=1/4)$ &
 \end{tabular}
 \caption{The marginal and conditional densities
 for $\cU=\{(1,2,0),(0,1,1),(1,1,1)\}$.
 The figures (a), (b) and (c) are the marginal density
 $p(x_i,x_j)$ for each pair $(i,j)$.
 The figures (d) and (e) are
 the conditional density $p(x_1,x_3|x_2)$
 for specific values of $x_2$.}
 \label{fig:high-1}
\end{figure}

%\subsection{Approximate factorization property}
%
%We give an approximate factorization property under the limit $\theta\to 0$.
%\[
% p(x|\theta)
% \ =\ 
% \exp\left(
% \sum_{u\in\cU}\theta_u\|u\|^2\prod_{j=1}^m\cos(\pi u_jx_j)
% +\mathrm{O}(\|\theta\|^2).
% \right)
%\]
%
%Define the {\it support} of $u$
%by $\sigma(u)=\{j\in[m]\mid u_j>0\}$.
%The proof of the following lemma is based on orthogonality
%of the cosine functions.
%
%\begin{lem}
% Let $\cU=\{0,1\}^m\setminus\{(0,\ldots,0)\}$ and $X=(X_1,\ldots,X_m)$ be a random vector
% drawn from $p(x_1,\ldots,x_m|\theta)$.
% Let $v\in\cU$.
% Then we have
% \[
%  \mathop{\rm E}
% \left[\prod_{j\in\sigma(v)}\left(X_j-1/2\right)\right]
% \ =\ 
% |\sigma(v)|
% \left(-\frac{2}{\pi^2}\right)^{|\sigma(v)|}
% \theta_v
% + \mathrm{O}(|\theta|^2)
% \]
% as $\theta=(\theta_u)\to 0$.
% The normalized moments are
% \[
%  \mathop{\rm E}
%  \left[\prod_{j\in\sigma(v)}\left(\frac{X_j-1/2}{\sqrt{V(X_j)}}\right)\right]
% \ =\ |\sigma(v)|\left(-\frac{2\sqrt{12}}{\pi^2}\right)^{|\sigma(v)|}\theta_v
% + \mathrm{O}(|\theta|^2).
% \]
%\end{lem}

\section{Maximum likelihood estimation of SGM} \label{section:mle}

Let $x(1),\ldots,x(n)$ be independent samples drawn from the
true density $p_0(x)$ whose support is $[0,1]^m$.
From the definition of SGM,
the maximum likelihood estimation of SGM is formulated as
a convex optimization program:
\begin{eqnarray*}
 \mathrm{maximize} && \sum_{t=1}^n
  \log\det\left(I+\sum_{u\in\cU}\theta_uH_u(x(t))\right),\\
 \mathrm{subject\ to} &&
  \theta\in\Theta=\left\{
 \theta\in\mbR^{\cU}\ \left|\ 
  I+\sum_{u\in\cU}\theta_uH_u(\xi)\succeq 0\quad \forall\xi\in [0,1]^m
 \right.
 \right\},
\end{eqnarray*}
where we put $H_u(x)=D^2(-\pi^{-2}\prod_{\rho=1}^m\cos(\pi u_{\rho}x_{\rho}))$.
Recall that $D^2$ is the Hessian operator and $\cU$ is a finite subset of $\mbZ_{\geq 0}^m$.

It is hard to write down $\Theta$ explicitly.
The difficulty follows from the statement ``for any $\xi\in[0,1]^m$''
in the definition of $\Theta$.
In general, for a set of feasible regions $\Theta_{\alpha}$
indexed by $\alpha$, the region $\cap_{\alpha}\Theta_{\alpha}$
is called a robust feasible region (see \cite{ben-tal1998}).

We consider two approaches to solve this problem.
We will first give a sequence $\Theta_M^{\circ}$ of regions
converging to $\Theta^{\circ}$, the interior of $\Theta$, as $M\to\infty$.
Hence the maximum likelihood estimator is calculated
with arbitrary accuracy in principle.
However, $\Theta_M^{\circ}$ has about $M^m$ constraints on $\theta$ and therefore
it is usually expensive if $m\geq 3$.
For the second approach, we give a proper subset
$\Theta^{\rm lit}$ of $\Theta$,
which consists of only $m$ constraints.
As a by-product of the second approach,
we obtain a lasso-type estimator
because $\Theta^{\rm lit}$ is compatible with $L^1$-constraints.
We call the maximizer of the log-likelihood
over these constrained regions
{\it the constrained maximum likelihood estimator}.
The constrained maximum likelihood estimator is calculated via
the determinant maximization algorithm
(\cite{vandenberghe1998}).

If $m=1$, the feasible region is the set of Fourier coefficients
of non-negative functions.
To deal with the feasible region,
\cite{fernandez2004} used Fej\'er's characterization:
the Fourier series of any non-negative function
is written as the square of a Fourier series.
More specifically, for any $r(x)=\sum_{u=0}^{\infty}r_u\cos(\pi ux)$,
its square $r(x)^2$ is of course non-negative and
written by a Fourier series.
The Fourier coefficients of $r(x)^2$ are written by
quadratic polynomials of $(r_u)_{u=0}^{\infty}$.
However, it is hard to use this representation for our problem
because we assume $\theta_u=0$ for $u\notin\cU$
and this restriction is not affine in $r_u$.

\subsection{Inner approximation of feasible region}

Let $\Theta^{\circ}$ be the interior of $\Theta$.
We give a sequence of tractable sets
$\Theta_M^{\circ}$ that converges to $\Theta^{\circ}$
from inside as $M\to\infty$.
We first remark the following lemma.

\begin{lem} \label{lem:interior}
 The set $\Theta^{\circ}$ is equal to
 $\left\{\theta\in\mbR^{\cU} \mid 
  D^2\psi(x|\theta)\succ 0\ \ \forall x\in[0,1]^m
 \right\}$.
\end{lem}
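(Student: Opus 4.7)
Let me denote the right-hand side by $\tilde\Theta$ and split the claim into the two inclusions $\tilde\Theta\subseteq\Theta^{\circ}$ and $\Theta^{\circ}\subseteq\tilde\Theta$. The structural fact I will lean on throughout is that
\[
 D^2\psi(x|\theta)\ =\ I+\sum_{u\in\cU}\theta_u H_u(x)
\]
is jointly continuous in $(x,\theta)$ and, crucially, \emph{affine} in $\theta$, with each $H_u(x)$ bounded on the compact cube $[0,1]^m$.

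For $\tilde\Theta\subseteq\Theta^{\circ}$ I would use compactness. Given $\theta_0\in\tilde\Theta$, the function $x\mapsto\lambda_{\min}(D^2\psi(x|\theta_0))$ is continuous and strictly positive on $[0,1]^m$, so it attains a positive minimum $\lambda_0>0$. A standard eigenvalue-perturbation bound combined with the uniform estimate $\|D^2\psi(x|\theta)-D^2\psi(x|\theta_0)\|\le\sum_u|\theta_u-\theta_{0,u}|\sup_x\|H_u(x)\|$ then forces $D^2\psi(x|\theta)\succeq(\lambda_0/2)I$ uniformly in $x$ for all $\theta$ in a small enough neighbourhood of $\theta_0$, placing that neighbourhood inside $\Theta$.

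For the reverse inclusion I would argue by contradiction, leveraging affineness. Suppose $\theta_0\in\Theta^{\circ}\setminus\tilde\Theta$; then $D^2\psi(x_0|\theta_0)$ is singular at some $x_0\in[0,1]^m$, and since $\theta_0\in\Theta$ still forces it to be positive semidefinite, the singularity is a genuine zero eigenvalue with unit eigenvector $v$. Substituting into the affine decomposition yields $\sum_u\theta_{0,u}\,v^\top H_u(x_0)v=-1$, which in particular rules out $\theta_0=0$. The natural destabilising direction to try is then $\delta=\theta_0$ itself, since by linearity
\[
 v^\top D^2\psi(x_0|(1+\epsilon)\theta_0)v\ =\ 1+(1+\epsilon)(-1)\ =\ -\epsilon\ <\ 0
\]
for every $\epsilon>0$, so $(1+\epsilon)\theta_0\notin\Theta$ while $(1+\epsilon)\theta_0\to\theta_0$ as $\epsilon\downarrow 0$, contradicting interiority.

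Neither direction looks seriously difficult; if anything, the subtle step is the first one, where one has to remember that the perturbation bound on $D^2\psi$ must be uniform in $x\in[0,1]^m$. That uniformity comes for free from compactness of the cube and continuity of the $H_u$, and the rest is elementary linear algebra.
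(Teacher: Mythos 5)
Your proof is correct and follows the same architecture as the paper's: compactness of $[0,1]^m$ plus a perturbation bound that is uniform in $x$ (coming from affineness in $\theta$ and boundedness of the $H_u$) gives $\tilde\Theta\subseteq\Theta^{\circ}$, and a destabilising direction exploiting the fact that the constant term of the affine map $\theta\mapsto D^2\psi(x|\theta)$ is $I$ gives the reverse inclusion. The only difference is cosmetic: the paper certifies degeneracy via the variational formula $\lambda_{\min}(X)=\min\{\mathop{\rm tr}(AX)\mid A\succeq 0,\ \mathop{\rm tr}A=1\}$ and then perturbs a single coordinate $\theta_v$ for which $\theta_v\mathop{\rm tr}(AH_v(x_0))<0$, whereas you test against the null eigenvector directly and scale the whole parameter vector radially --- your $(1+\epsilon)\theta_0$ trick is, if anything, slightly cleaner.
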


We prepare some notations for constructing $\Theta_M^{\circ}$.
We consider the lattice points $L_M^m$,
where $L_M=\{\frac{0}{M},\frac{1}{M},\cdots,\frac{M}{M}\}$.
Let $\|u\|_{\infty}=\max_{j}|u_j|$
and $U_{\max}=\max_{u\in\cU}\|u\|_{\infty}$.
Define a linear operator $K_M$ on $\mbR^{\cU}$
by $(K_M\theta)_u=\theta_u/\prod_{j=1}^m(1-u_j/M)$ for $\theta\in\mbR^{\cU}$.
Finally, we define $\Theta_M^{\circ}$ for each $M\geq U_{\max}+1$ by
\[
 \Theta_M^{\circ}
 \ =\ \left\{ \theta\in\mbR^{\cU}
 \ \left|\ 
 D^2\psi(\xi|K_M\theta)\succ 0,\ \forall \xi\in L_M^m
 \right.
 \right\}.
\]
Remark that $\Theta_M^{\circ}$
is written in a finite number of constraints,
in contrast to $\Theta^{\circ}$ and $\Theta$.
We have the following theorem.

\begin{thm} \label{thm:inner-approx}
 For any $M\geq U_{\max}+1$, we have $\Theta_M^{\circ}\subset \Theta^{\circ}$
 and
 \[
  \Theta^{\circ}\ =\ 
  \limsup_{M\to\infty}\Theta_M^{\circ},
 \]
 where $\limsup_{M\to\infty}\Theta_M^{\circ}$
 is defined by $\cap_{M'\geq 1}\cup_{M\geq M'}\Theta_M^{\circ}$.
\end{thm}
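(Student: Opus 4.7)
The plan is to express $D^2\psi(x|\theta)$ at arbitrary $x\in[0,1]^m$ as a non-negative combination of the matrices $D^2\psi(\xi|K_M\theta)$ for $\xi\in L_M^m$, via a discrete product Fej\'er kernel; since the Fej\'er weights are non-negative, positive definiteness on the lattice will then transfer to $[0,1]^m$ and $\Theta_M^\circ\subset\Theta^\circ$ will follow from Lemma~\ref{lem:interior}. First, expanding $\prod_\rho\cos(\pi u_\rho x_\rho)=2^{-|\sigma(u)|}\sum_{\epsilon\in\{\pm1\}^{\sigma(u)}}\exp(i\pi\sum_\rho\epsilon_\rho u_\rho x_\rho)$ rewrites $D^2\psi(x|\theta)=I+\sum_v a_v(\theta)\,vv^{\top}e^{i\pi v\cdot x}$ with integer frequency vectors satisfying $\max_j|v_j|\le U_{\max}$. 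Let $\tilde F_M(t)=\sum_{|k|<M}(1-|k|/M)e^{i\pi kt}\ge 0$ be the 1D Fej\'er kernel of period~$2$ and $\tilde F_M^{(m)}(t)=\prod_j\tilde F_M(t_j)$. Using that Fej\'er convolution multiplies the $v$-th Fourier mode by $\prod_j(1-|v_j|/M)$ — precisely reversing the rescaling $K_M$ — yields, for $M\ge U_{\max}+1$, the identity
\[
 D^2\psi(x|\theta)
 \ =\ \frac{1}{2^m}\int_{[-1,1]^m}\tilde F_M^{(m)}(x-y)\,D^2\psi(y|K_M\theta)\,dy.
\]

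The integrand is a trigonometric polynomial in $y$ of degree at most $(M-1)+U_{\max}<2M$ in each coordinate, so the trapezoidal rule on the full torus lattice $\Lambda_M=\{k/M:k=0,\ldots,2M-1\}^m$ is exact and produces
\[
 D^2\psi(x|\theta)
 \ =\ \frac{1}{(2M)^m}\sum_{\xi\in\Lambda_M}\tilde F_M^{(m)}(x-\xi)\,D^2\psi(\xi|K_M\theta).
\]
Next I fold $\Lambda_M$ onto $L_M^m$ via the symmetry $D^2\psi((\xi_1,\ldots,-\xi_j,\ldots,\xi_m)|\phi)=S_j\,D^2\psi(\xi|\phi)\,S_j$, where $S_j=\mathrm{diag}(1,\ldots,-1,\ldots,1)$ with the $-1$ in position $j$. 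This is immediate entrywise from the parity of $\cos$ and $\sin$, and combined with period-$2$ periodicity $\xi_j\equiv 2-\xi_j$ it identifies every $\xi\in\Lambda_M$ with a signature-conjugate of some point in $L_M^m$. Since conjugation by the orthogonal $S_j$ preserves positive definiteness, the assumption $\theta\in\Theta_M^\circ$ propagates $D^2\psi(\xi|K_M\theta)\succ 0$ from $L_M^m$ to all of $\Lambda_M$.

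For any $x\in[0,1]^m$ the weight attached to the nearest lattice point is strictly positive, since $\tilde F_M(\phi)=M^{-1}(\sin(M\phi/2)/\sin(\phi/2))^2>0$ on $|\phi|<2\pi/M$. Hence the discrete representation realizes $D^2\psi(x|\theta)$ as a non-trivial non-negative combination of positive definite matrices, giving $D^2\psi(x|\theta)\succ 0$ for all $x\in[0,1]^m$ and therefore $\theta\in\Theta^\circ$ by Lemma~\ref{lem:interior}. This settles $\Theta_M^\circ\subset\Theta^\circ$, and hence $\limsup_M\Theta_M^\circ\subset\Theta^\circ$. For the reverse inclusion: if $\theta\in\Theta^\circ$, then compactness of $[0,1]^m$ gives $\lambda_{\min}(D^2\psi(\cdot|\theta))\ge\varepsilon>0$; since $K_M\theta\to\theta$ in $\mbR^{\cU}$ while $(\eta,x)\mapsto D^2\psi(x|\eta)$ is jointly continuous on compacts, $D^2\psi(\xi|K_M\theta)\succ 0$ holds uniformly in $\xi\in[0,1]^m$ once $M$ is large, so $\theta\in\Theta_M^\circ$ for all such $M$.

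The main hurdle is establishing the Fej\'er representation together with the sign-symmetry folding of $\Lambda_M$ onto $L_M^m$; the delicate point is matching the involutions $\xi_j\mapsto 2-\xi_j$ on the discrete torus to the signature-matrix symmetry of the Hessian, which is what makes the particular $K_M$ rescaling the natural one. Once these identities are in hand and applied within the trapezoidal-rule exactness range $M\ge U_{\max}+1$, the positive-definiteness conclusion is an immediate consequence of the non-negativity of Fej\'er weights, and the convergence half reduces to standard uniform continuity on a compact set.
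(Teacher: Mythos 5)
Your proof is correct and follows essentially the same route as the paper: the inclusion $\Theta_M^{\circ}\subset\Theta^{\circ}$ rests on exactly the paper's key identity expressing $D^2\psi(x|\theta)$ as a Fej\'er-kernel convex combination of $\{D^2\psi(\xi|K_M\theta)\}$ over the symmetric lattice (your continuous convolution plus trapezoidal exactness is just a repackaging of the paper's direct discrete computation, and your signature-conjugation folding is the paper's parity step made explicit), while the reverse inclusion uses $K_M\theta\to\theta$ together with compactness, as in the paper. No gaps.
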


The constrained maximum likelihood estimator
of $\theta$ over $\Theta_M^{\circ}$ is calculated via
the determinant maximization algorithm
(\cite{vandenberghe1998}).
Hence, in principle, we can calculate
the maximum likelihood estimator with
arbitrary accuracy.
However, the region $\Theta_M^{\circ}$
consists of $|L_M^m|=(M+1)^m$ constraints.
This number is usually expensive if $m\geq 3$.
In the following subsection,
we give a proper subset of $\Theta$
which consists of only $m$ constraints.

\begin{exmp}
 Let $m=2$ and $\cU=\{(1,1),(2,2)\}$.
 The approximated regions $\Theta_M^{\circ}$ ($M=5,10,20,40$)
 are illustrated in Figure~\ref{fig:2dim-feas} (a).
 For this case, we can give a precise expression of $\Theta$.
 The two eigenvalues of the Hessian matrix $D^2\psi(x|\theta)$
 are given by
 \begin{eqnarray*}
  \lambda_{\pm} &=& 1+\theta_{(1,1)}\cos(\pi(x_1\pm x_2))+4\theta_{(2,2)}\cos(2\pi(x_1\pm x_2)).
 \end{eqnarray*}
 In the theory of time-series analysis,
 the function $f(z):=1+\sum_k\rho_k\cos(kz)$ of $z$
 is the spectral density of a MA($k$) process
 with the autocorrelation coefficients $(\rho_j)_{j=1}^k$.
 In particular, for MA(2),
 it is known that $f(z)$ is non-negative for any $z$
 if and only if
 $|\rho_1|+|\rho_2|\leq 1$ or $\rho_1^2\leq 4\rho_2(1-\rho_2)$ holds
 (see \cite{box1976}, Section 3.4).
 Therefore the feasible region for $\cU=\{(1,1),(2,2)\}$ is given by
 \[
  |\theta_{(1,1)}|+|4\theta_{(2,2)}|
 \ \leq\ 1\quad \mbox{or}\quad
 (\theta_{(1,1)})^2\leq 4(4\theta_{(2,2)})(1-4\theta_{(2,2)}).
 \]
 The region $\Theta_M^{\circ}$ shown in Figure~\ref{fig:2dim-feas} (a)
 is close to this region.
 We also illustrate the approximated regions
 for another example $\cU=\{(1,1),(3,1)\}$ in Figure~\ref{fig:2dim-feas} (b).
 \qed
\begin{figure}[htbp]
\begin{center}
\begin{tabular}{cc}
\includegraphics[width=6cm]{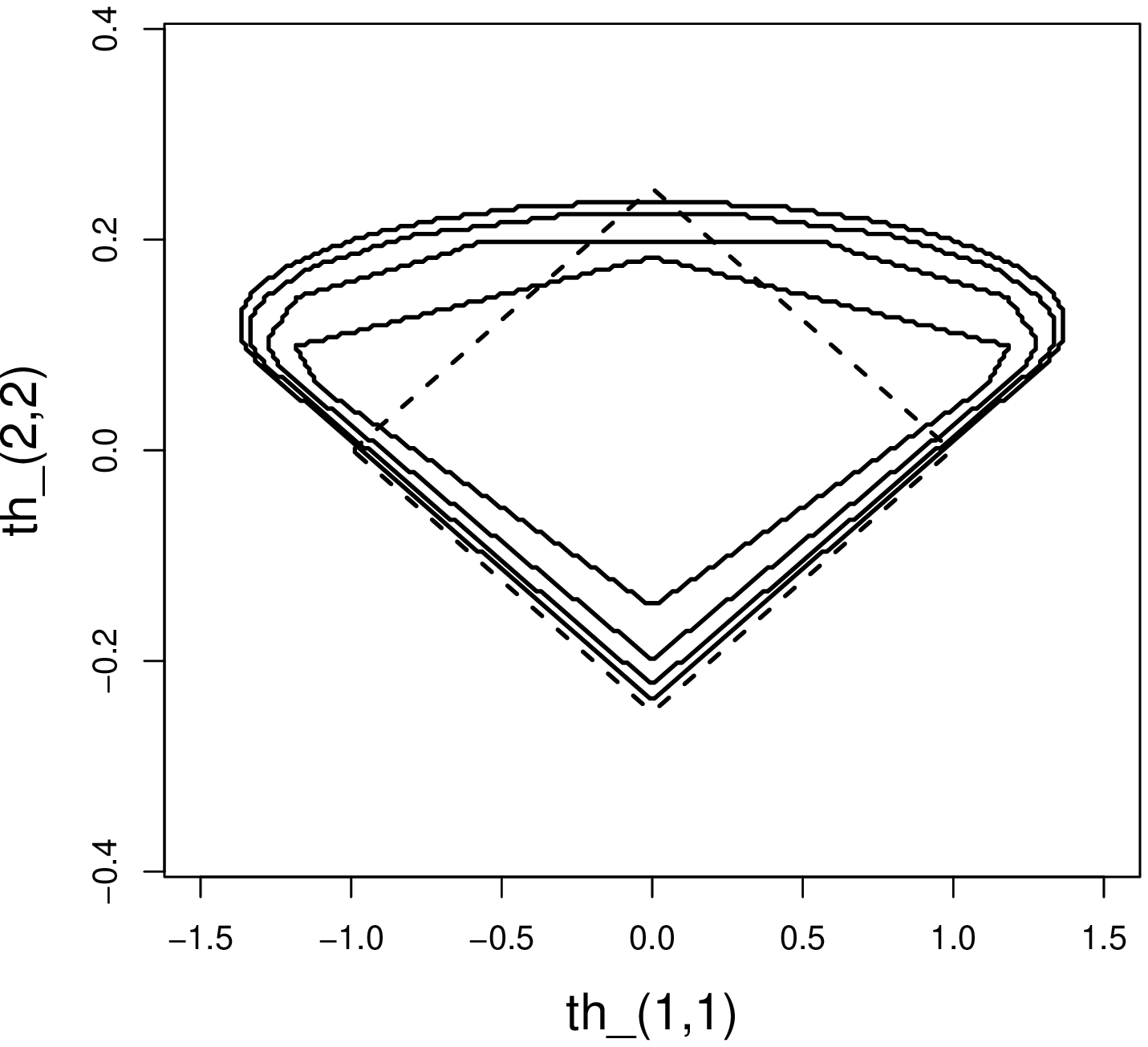}
 &
\includegraphics[width=6cm]{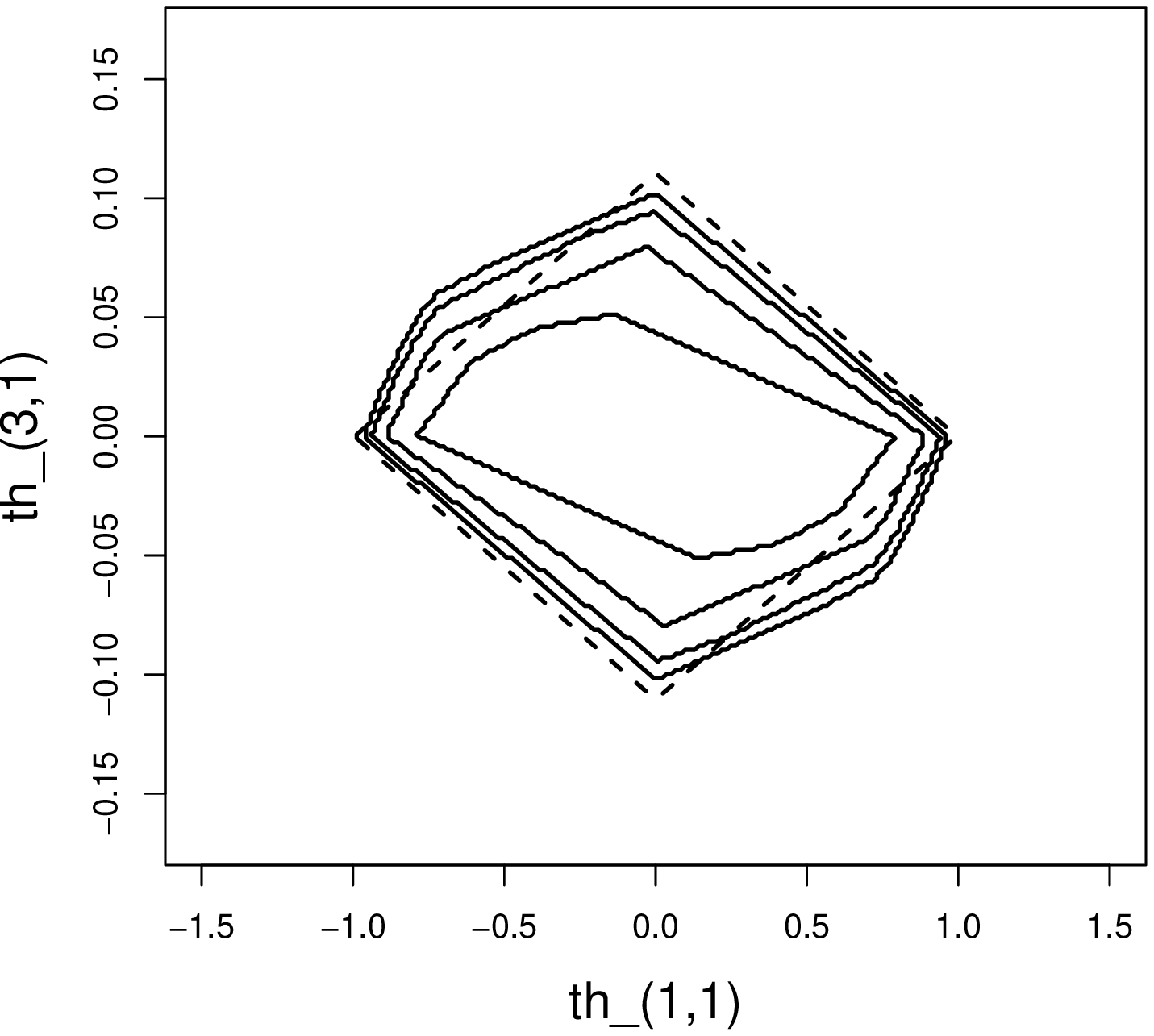}
 \\
 (a)\ $\cU=\{(1,1),(2,2)\}$.
 &
 (b)\ $\cU=\{(1,1),(3,1)\}$.
\end{tabular}
\end{center}
 \caption{The approximated region $\Theta_M^{\circ}$ (solid line;
 $M=5,10,20,40$ from inner side)
 and the little parameter space $\Theta^{\rm lit}$ (dashed line)
 defined in Subsection~\ref{subsection:little}.}
 \label{fig:2dim-feas}
\end{figure}
\end{exmp}

We remark that
the feasible region for MixM (Definition~\ref{defn:mixture})
is approximated from the inner side by
\[
 \tilde{\Theta}_M^{\circ}\ :=\ \left\{\theta\in\mbR^{\cU}\mid
  \tilde{p}(\xi|K_M\theta)>0,\ \mbox{for}\ \xi\in L_M^m\right\}
\]
The proof is similar to that of Theorem~\ref{thm:inner-approx} and omitted here.

\subsection{A conservative region and Lasso-type estimation} \label{subsection:little}

We give a sufficient condition such that $\theta\in\Theta$.
Define a set $\Theta^{\rm lit}$ by
\[
 \Theta^{\rm lit}
 \ =\ \left\{
 \theta\in\mbR^{\cU}
 \ \left|\ 
 1 - \sum_{u\in\cU}|\theta_u|u_j^2\geq 0\quad
 (\forall j=1,\ldots,m)
 \right.
 \right\}.
\]
We call $\Theta^{\rm lit}$ {\it the little parameter space}.
It is an intersection of $m$ constraints.
In the following theorem, we show that the little parameter space $\Theta^{\rm lit}$
is a subset of the feasible region $\Theta$.
In other words, $\Theta^{\rm lit}$ is more conservative than $\Theta$ in the sense of robustness.
We say that a subset $\cV$ of $\cU$ is linearly independent modulo 2
if a linear map $\ell:\{0,1\}^{\cV}\mapsto \{0,1\}^m$
defined by $\ell(\epsilon)=\sum_{u\in\cV}\epsilon_uu$ (mod $2$)
has the kernel $\{0\}$.
For each $\cV\subset\cU$,
the set of vectors that have only $\cV$-components
is denoted by $\mbR_{\cV}=\{\theta\in\mbR^{\cU}\mid \theta_u=0\ \forall u\notin\cV\}$.

\begin{thm} \label{thm:little}
 For any $\cU$, $\Theta^{\rm lit}\subset\Theta$.
 Furthermore, if a subset $\cV$ of $\cU$ is linearly independent modulo 2,
 then we have $\Theta^{\rm lit}\cap\mbR_{\cV}=\Theta\cap\mbR_{\cV}$.
 In particular, if $\cU$ itself is linearly independent modulo 2,
 then $\Theta^{\rm lit}=\Theta$.
\end{thm}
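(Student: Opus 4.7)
The plan is to prove the two assertions by first deriving a convenient rank-one decomposition of each Hessian block $H_u(x)$ and, second, by showing that the feasible region is witnessed at the corners of the cube.

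For the inclusion $\Theta^{\rm lit}\subset\Theta$, I would start from the product-to-sum identity $\prod_{\rho=1}^m\cos(\pi u_\rho x_\rho)=2^{-m}\sum_{\epsilon\in\{\pm 1\}^m}\cos(\pi\langle\epsilon\odot u,x\rangle)$, where $\epsilon\odot u$ denotes the componentwise product. Taking two derivatives turns each summand into a rank-one matrix, giving
\[
 H_u(x)\ =\ \frac{1}{2^m}\sum_{\epsilon\in\{\pm 1\}^m}\cos(\pi\langle\epsilon\odot u,x\rangle)\,(\epsilon\odot u)(\epsilon\odot u)^{\top}.
\]
For any $v\in\mbR^m$, this yields
\[
 |v^{\top}H_u(x)v|\ \leq\ \frac{1}{2^m}\sum_{\epsilon}\Big(\sum_{j}\epsilon_j u_jv_j\Big)^2\ =\ \sum_{j}u_j^2 v_j^2,
\]
since $2^{-m}\sum_{\epsilon}\epsilon_j\epsilon_k=\delta_{jk}$. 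Summing with weights $|\theta_u|$ and using the hypothesis $\sum_{u}|\theta_u|u_j^2\leq 1$ for every $j$ gives $v^{\top}D^2\psi(x|\theta)v\geq \|v\|^2-\sum_j v_j^2\sum_u|\theta_u|u_j^2\geq 0$, so $\theta\in\Theta$.

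For the converse statement on $\mbR_{\cV}$, I would evaluate the Hessian at the corners $x\in\{0,1\}^m$. At such $x$, every $\sin(\pi u_\rho x_\rho)$ vanishes, so $H_u(x)$ is diagonal with entries $u_j^2\prod_\rho\cos(\pi u_\rho x_\rho)=u_j^2(-1)^{\langle \xi,u\rangle}$, where $\xi\in\{0,1\}^m$ encodes the corner and the inner product is taken modulo $2$. The PSD condition $D^2\psi(x|\theta)\succeq 0$ therefore reduces, at corners, to the linear inequalities $1+\sum_{u\in\cV}\theta_u u_j^2(-1)^{\langle \xi,u\rangle}\geq 0$ for every $j$ and every $\xi$. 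To get the bound in $\Theta^{\rm lit}$, I need to choose $\xi$ so that $(-1)^{\langle \xi,u\rangle}=-\operatorname{sign}(\theta_u)$ simultaneously for all $u\in\cV$ with $\theta_u\neq 0$. This is a linear system over $\mbF_2$: the map $\mbF_2^m\to\mbF_2^{\cV}$ sending $\xi$ to $(\langle\xi,u\bmod 2\rangle)_{u\in\cV}$ is the transpose of the map $\ell$ in the statement, so the hypothesis that $\cV$ is linearly independent modulo $2$ (i.e.\ $\ell$ is injective) is exactly what forces this transposed map to be surjective, and a valid $\xi$ exists. Feeding it back yields $1-\sum_{u\in\cV}|\theta_u|u_j^2\geq 0$, i.e.\ $\theta\in\Theta^{\rm lit}$. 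The ``in particular'' clause is the special case $\cV=\cU$.

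The main obstacle I anticipate is simply recognising and proving the rank-one decomposition of $H_u(x)$ cleanly; once this is in hand, the bound is essentially Cauchy--Schwarz, and the corner argument is a short linear-algebra exercise over $\mbF_2$. A minor bookkeeping issue is handling indices $u$ with $\theta_u=0$, but restricting the sign prescription to $\{u\in\cV:\theta_u\neq 0\}$ (still linearly independent modulo $2$) deals with this without changing the final inequality.
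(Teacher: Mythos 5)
Your proposal is correct and follows essentially the same route as the paper: the inclusion $\Theta^{\rm lit}\subset\Theta$ via the product-to-sum identity turning $H_u(x)$ into an average of signed rank-one matrices $(\epsilon\odot u)(\epsilon\odot u)^{\top}$ bounded by $d(u)^2$, and the reverse inclusion on $\mbR_{\cV}$ by evaluating the (diagonal) Hessian at a corner $\xi\in\{0,1\}^m$ chosen, via linear independence modulo $2$, to flip every $\theta_v$ to $-|\theta_v|$. Your explicit surjectivity-of-the-transpose argument for the existence of $\xi$ is a slightly more detailed justification of a step the paper states without proof, but the substance is identical.
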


By letting $\cV$ be a one-element set $\{u\}$,
we have the relation $\Theta^{\rm lit}\cap \mbR_{\{u\}}=\Theta\cap\mbR_{\{u\}}$.
This shows that $\Theta^{\rm lit}$
contains at leat $2|\cU|$ boundary points of $\Theta$.
The little parameter space
for $\cU=\{(1,1),(2,2)\}$ and $\cU=\{(1,1),(3,1)\}$
is indicated in Figure~\ref{fig:2dim-feas} (a) and (b), respectively.

The constrained maximum likelihood estimator of $\theta$ over $\Theta^{\rm lit}$
is computed via the determinant maximization algorithm
by introducing non-negative slack variables $\theta_u^+$ and $\theta_u^-$
such that $\theta_u=\theta_u^+-\theta_u^-$ and $|\theta_u|=\theta_u^++\theta_u^-$.
The estimator is usually sparse.
This sparsity is closely related to the lasso estimator
\cite{tibshirani1996} in that
the regression method
is executed with $L^1$-constraints.
Our little parameter space $\Theta^{\rm lit}$
is also represented by $L^1$-constraints.
Hence we call the constrained maximum likelihood estimator of $\theta$ over $\Theta^{\rm lit}$
{\it the lasso-type estimator for SGM}.
Furthermore, we will use an indexed set $\Theta_{\tau}^{\rm lit}$ with
a tuning parameter $\tau\in [0,1]$ by
\[
 \Theta_{\tau}^{\rm lit}
 \ =\ \left\{
 \theta\in\mbR^{\cU}
 \ \left|\ 
  \tau - \sum_{u\in\cU}|\theta_u|u_j^2\geq 0\quad
  (\forall j=1,\ldots,m)
 \right.
 \right\}.
\]
In particular, $\Theta_0^{\rm lit}=\{0\}$
and $\Theta_1^{\rm lit}=\Theta^{\rm lit}$.
The tuning parameter $\tau$ can be selected
by cross validation.

We remark that
the feasible region for MixM (Definition~\ref{defn:mixture})
has the following conservative region
\[
 \tilde{\Theta}^{\rm lit}\ :=\ 
 \left\{
  \theta\in\mbR^{\cU}\ \left|\ 
  1-\sum_{u\in\cU}|\theta_u|\|u\|^2\geq 0
 \right.
 \right\}.
\]
Furthermore, if a subset $\cV$ of $\cU$ is linearly
independent modulo 2, then we have $\tilde{\Theta}^{\rm lit}\cap\mbR_{\cV}
=\tilde{\Theta}\cap\mbR_{\cV}$.
The proof is similar to that of Theorem~\ref{thm:little} and is omitted here.

Recently, lasso-type estimators for graphical Gaussian models
are proposed by several authors:
\cite{yuan2007}, 
\cite{banerjee2008} 
and \cite{friedmann2008}.
On the other hand, a sparse density estimation (SPADES)
for mixture models is considered in \cite{bunea2007}.
Our MixM is considered as a version of SPADES
although the estimation procedure is different.
In Section~\ref{section:numerical},
we compare SGM with 
MixM and the graphical Gaussian model by numerical examples.

\section{Numerical examples} \label{section:numerical}

We give numerical examples on simulated and real datasets.
We calculate the constrained maximum likelihood estimator
and study its predictive performance.
We compare SGM with the graphical Gaussian model (with lasso)
and MixM (Definition~\ref{defn:mixture}).

We describe some notations and assumptions.
We use the following frequency set for SGM throughout this section:
\begin{equation}
 \cU\ =\ \left\{u\in\mbZ_{\geq 0}^m
 \ \left|\ \|u\|_{\infty}\leq 2,\ \|u\|_{1}\leq 3\right.
 \right\},
  \label{eqn:numerical-freq}
\end{equation}
where $\|u\|_{\infty}=\max_j|u_j|$ and $\|u\|_1=\sum_j|u_j|$.
The elements of $\cU$ are given by
$(1,0,\ldots,0)$,
$(2,0,\ldots,0)$,
$(1,1,0,\ldots,0)$,
$(2,1,0,\ldots,0)$,
$(1,1,1,0,\ldots,0)$
and their permutations of the components.
The cardinality of $\cU$ is
$m(m+1)(m+5)/6$.
Let $\hat{\theta}_M^{\circ}=(\hat{\theta}_{M,u}^{\circ})_{u\in\cU}$ and
$\hat{\theta}_{\tau}^{\rm lit}=(\hat{\theta}_{\tau,u}^{\rm lit})_{u\in\cU}$
denote the constrained maximum likelihood estimators of $\theta$ over the regions
$\Theta_{M}^{\circ}$ and $\Theta_{\tau}^{\rm lit}$, respectively (see Section~\ref{section:mle}
for the definition of $\Theta_M^{\circ}$ and $\Theta_{\tau}^{\rm lit}$).
We call $\hat{\theta}_{\tau}^{\rm lit}$ the lasso-type estimator of SGM.
The same notations on the estimators are used also for MixM.

The graphical Gaussian lasso estimator $\hat{C}=\hat{C}(\tau)$ of the concentration matrix (\cite{yuan2007})
is formulated as follows
\[
 \mbox{min.}\quad \{\log\det(C)+\mathop{\rm tr}(\hat{\Sigma} C)\}
 \quad \mbox{s.t.}\quad \sum_{i<j}|C_{ij}|\leq \tau\sum_{i<j}|(\hat{\Sigma}^{-1})_{ij}|,
\]
where $\hat{\Sigma}$ is the sample correlation and the tuning parameter $\tau$ ranges over $[0,1]$.
If $\tau=1$, the graphical Gaussian lasso estimator coincides with the maximum likelihood estimator
(this is not the case for the lasso-type estimators of SGM and MixM).
The partial correlation coefficient of $x_i$ and $x_j$ is estimated by
$\hat{\rho}_{ij}=-\hat{C}_{ij}/\sqrt{\hat{C}_{ii}\hat{C}_{jj}}$.

For given raw data $(D_{ti})_{1\leq t\leq n,1\leq i\leq m}$,
we preprocess it before estimation.
For Gaussian models, we use the data $\tilde{D}_{ti}$
scaled by the standard way:
\[
 \tilde{D}_{ti}\ =\ \frac{D_{ti}-\bar{D}_{\cdot i}}{\mathrm{sd}(D_{\cdot i})},
 \quad \bar{D}_{\cdot i}\ =\ \frac{1}{n}\sum_{t=1}^n D_{ti},
 \quad \mathrm{sd}(D_{\cdot i})\ =\ \sqrt{\frac{1}{n}\sum_{t=1}^n(D_{ti}-\bar{D}_{\cdot i})^2}.
\]
For SGM and MixM,
the data is further transformed into $X_{ti}=\Phi(\tilde{D}_{ti})$,
where $\Phi$ is the standard normal cumulative distribution function,
in order that $X_{ti}$ ranges over $[0,1]$.
By the transform $\Phi$, the standard normal density 
as the null Gaussian model
is transformed into the uniform density as the null SGM and the null MixM.

We used the package SDPT3 for solving the determinant-maximization problem
on MATLAB (\cite{toh2006}).

\subsection{Simulation}

We first confirm that the maximum likelihood estimator
is actually computed by the method described in Section~\ref{section:mle}.
Consider Example~\ref{exmp:complicated} of Subsection~\ref{subsection:examples}.
The true parameter is $\theta_{(1,2,0)}=0.1$, $\theta_{(0,1,1)}=0.3$
and $\theta_{(1,1,1)}=0.2$ with the true frequency set $\cU_0=\{(1,2,0),(0,1,1),(1,1,1)\}$.
The frequency set (\ref{eqn:numerical-freq}) we use for estimation is written in a matrix form
\begin{equation}
 \cU\ =\ \left(
  \begin{array}{cccccccccccccccc}
1&2&0&1&2&0&1&0&1&2&0&1&0&0&1&0\\
0&0&1&1&1&2&2&0&0&0&1&1&2&0&0&1\\
0&0&0&0&0&0&0&1&1&1&1&1&1&2&2&2
  \end{array}
 \right).
 \label{eqn:numerical-freq-3}
\end{equation}
The columns are arranged according to the lexicographic order.
A result of estimation is given in Figure~\ref{fig:exmp-check}.
The sample size is $n=100$ and the number of experiments is $100$.
The samples were generated by the exact method of \cite{sei2006}.
Both estimators actually distribute around the true parameter.

\begin{figure}[htbp]
 \begin{center}
  \begin{tabular}{cc}
%   \includegraphics[width=6cm,clip]{simu_exmp7_fgrad2_M3.eps}
%   &
%   \includegraphics[width=6cm,clip]{simu_exmp7_fgrad2_M10.eps}
%   &
   \includegraphics[width=6cm,clip]{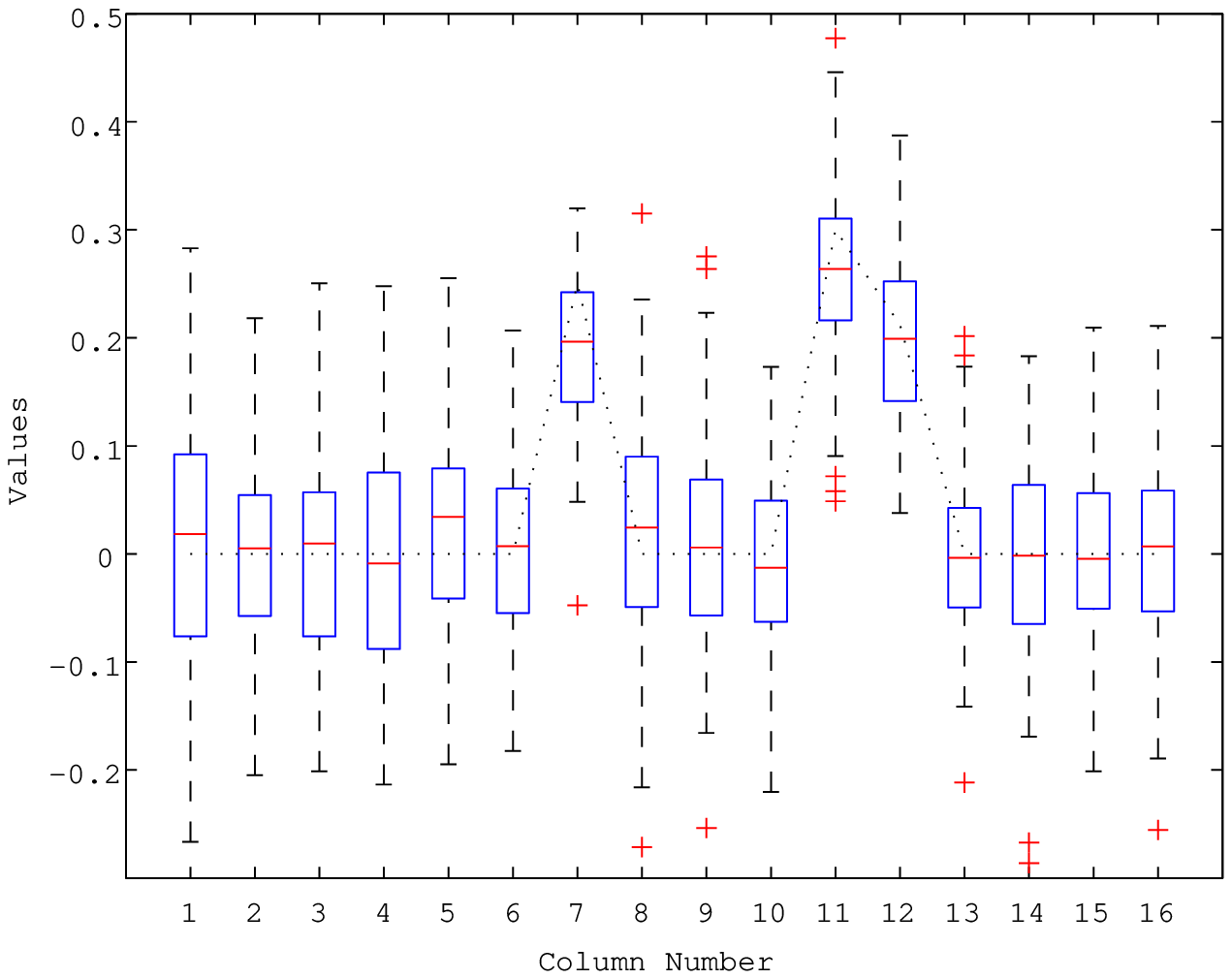}
   &
   \includegraphics[width=6cm,clip]{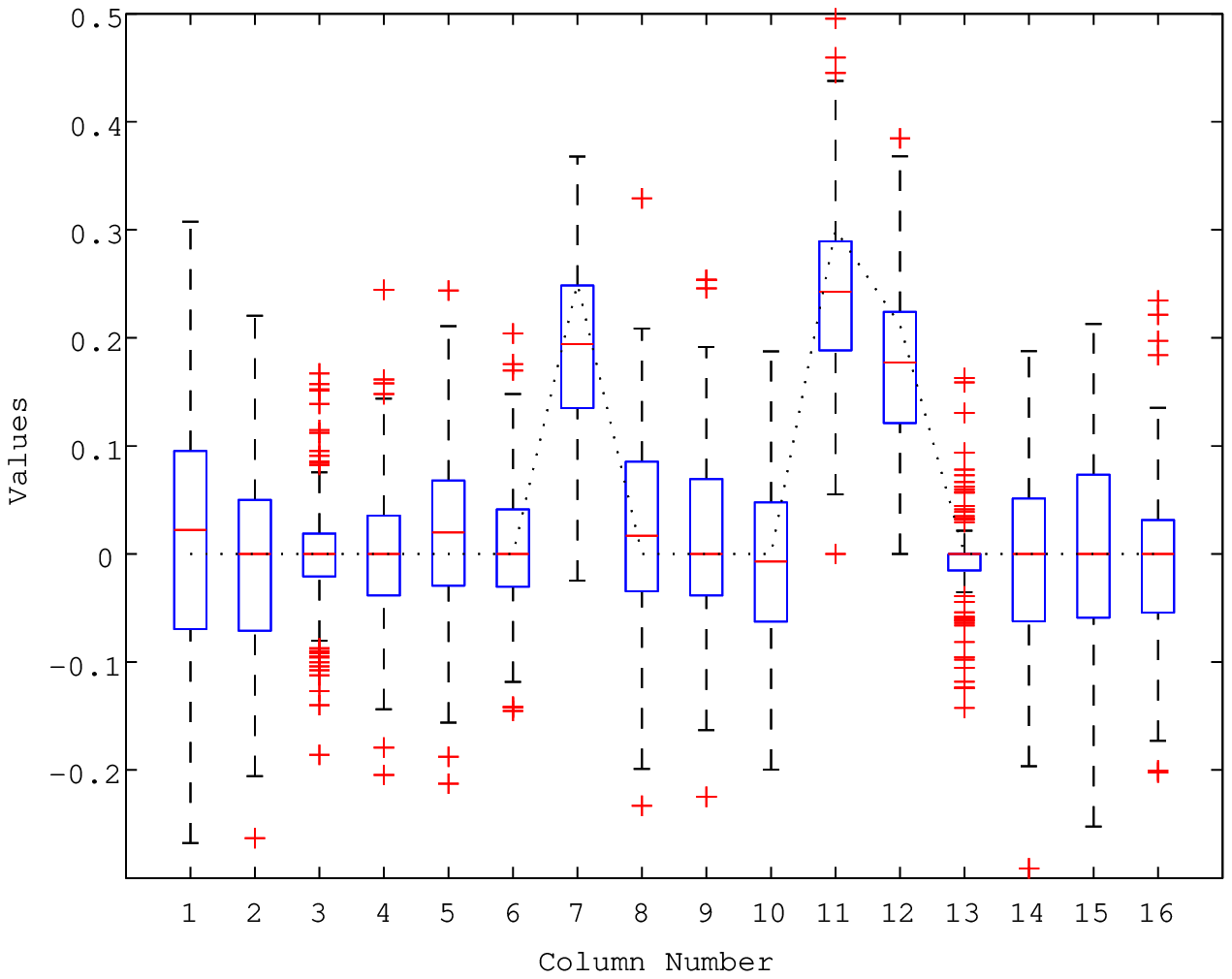}
   \\
   (a) $\sqrt{J_{uu}}\hat{\theta}_{M,u}^{\circ}$ ($M=5$).
   &
   (b) $\sqrt{J_{uu}}\hat{\theta}_{\tau,u}^{\rm lit}$ ($\tau=1$).
  \end{tabular}
 \end{center}
 \caption{A simulation of estimation of SGM.
 The box-plot shows
 each component of the constrained maximum likelihood estimators
 (a) $\hat{\theta}_M^{\circ}$ for $M=5$ and (b) $\hat{\theta}_{\tau}^{\rm lit}$ for $\tau=1$.
 The values are normalized by the square root $\sqrt{J_{uu}}$ of the Fisher information.
 The horizontal axis denotes $u\in\cU$ arranged according to (\ref{eqn:numerical-freq-3}).
 The dashed line denotes the true parameter.
 The sample size is $n=100$ and the number of experiments is $100$.
 }
 \label{fig:exmp-check}
\end{figure}

We next compare SGM with MixM and Gaussian models.
We consider a five-dimensional example.
Let $\phi(x|\mu,\Sigma)$ denote
the normal density with mean $\mu$ and
covariance $\Sigma$.
Let $m=5$ and define the true density $p_0(x)$ by
\begin{eqnarray}
 p_0(x)
 &=& \phi(x_1|0,1)\phi(x_2|x_1,1)\phi(x_3|0,\sigma_3^2(x_2))
 \phi(x_4,x_5|0,\Sigma_{45}(x_3)),
 \label{eqn:5-dim-exmp}
\end{eqnarray}
where
\[
 \sigma_3^2(x_2)\ =\ 1+\tanh(x_2)
 \quad \mbox{and}\quad
 \Sigma_{45}(x_3)\ =\ \left(\begin{array}{cc}1&\tanh(x_3)\\\tanh(x_3)&1\end{array}\right).
\]
By the definition, the set of variables $(x_1,x_2)$
has positive correlation,
the variable $x_3$ has heteroscedasticity against $x_2$,
and the set of variables $(x_3,x_4,x_5)$
has three-dimensional interaction.
Remark that the density does not belong to SGM.
A numerical result is shown in Table~\ref{table:simu}.
The sample size is $n=40$ and
the number of experiments is $200$.
All of the three models detected
the correlation of the pair $(x_1,x_2)$.
However, only SGM effectively detected the heteroscedasticity of $(x_2,x_3)$
and the three-dimensional interaction $(x_3,x_4,x_5)$.
The estimator of MixM was too sparse, and did not effectively detect them.

For the same true density,
we also computed the predictive performance
of the estimators of SGM, MixM and Gaussian.
We use the expected predictive log-likelihood
as the index of the predictive performance.
The arbitrary constant of the log-likelihood is
determined in such a way that the log-likelihood of the null model is zero.
The sample size is $n=40$ for observation
and $10$ for prediction.
The number of experiments is $200$.
The maximum mean predictive log-likelihood of SGM
is estimated as $3.37 (\pm 0.33)$ at $\tau=1.0$,
where the confidence interval is based on the 95\% interval with the normal approximation.
For MixM and Gaussian, the maximum value is estimated as $1.99 (\pm 0.15)$ at $\tau=1.0$
and $2.72 (\pm 0.26)$ at $\tau=0.32$, respectively.
Hence SGM has better predictive performance than MixM and Gaussian.

\subsection{Real dataset}

We consider the digoxin clearance data reported in \cite{halkin1975} (see also \cite{edwards2000}).
The data consists of creatinine clearance ($x_1$),
digoxin clearance ($x_2$) and urine flow ($x_3$) of 35 patients.
In Table~\ref{table:digoxin},
we compare the lasso-type estimators of SGM, MixM and the Gaussian model.
The result shows that for the data our SGM gives slightly
better predictive performance than MixM and the Gaussian models.
As stated in \cite{edwards2000},
partial correlation of $(x_1,x_3)$ is not significant.
However, our model suggests a heteroscedastic effect of 
$x_1$ (creatinine clearance) against $x_3$ (urine flow).

 \begin{table}[htbp]
  \caption{
  Mean value of the lasso-type estimators  for the five-dimensional data.
  The tuning parameter $\tau$ is set to $1$.
  The sample size is $n=40$
  and the number of experiments is $200$.
  The confidence interval is based on the 95\% interval
  with the normal approximation.
  For SGM and MixM, only top ten values of $\sqrt{J}_{uu}\hat{\theta}_{\tau,u}^{\rm lit}$
  are shown.
  For the Gaussian model, $u$ is
  the indicator vector of a pair $(i,j)$.
  }
  \label{table:simu}
  \begin{center}
{\footnotesize
\begin{tabular}{c|r||c|r||c|r}
 \multicolumn{2}{c||}{SGM}
 &\multicolumn{2}{c||}{MixM}
 &\multicolumn{2}{c}{Gaussian}\\
% \multicolumn{2}{c||}{}
% &\multicolumn{2}{c||}{}
% &\multicolumn{2}{c}{}\\
 $u$&\multicolumn{1}{|c||}{$\mathrm{E}[\sqrt{J_{uu}}\hat{\theta}_{\tau,u}^{\rm lit}]$}
&$u$&\multicolumn{1}{|c||}{$\mathrm{E}[\sqrt{J_{uu}}\hat{\theta}_{\tau,u}^{\rm lit}]$}
&$u$&\multicolumn{1}{|c}{$\mathrm{E}[\hat{\rho_{ij}}(\tau)]$}\\
\hline
&&&&\\
 $(1,1,0,0,0)$&0.510 ($\pm 0.013$)
&$(1,1,0,0,0)$&0.123 ($\pm 0.006$)
&$(1,1,0,0,0)$&0.706 ($\pm 0.011$)\\
$(0,0,1,1,1)$&-0.297 ($\pm 0.017$)
&$(0,1,2,0,0)$&-0.031 ($\pm 0.005$)
&$(1,0,0,0,1)$&-0.023 ($\pm 0.023$)\\
$(0,1,2,0,0)$&-0.232 ($\pm 0.015$)
&$(0,0,1,1,1)$&-0.007 ($\pm 0.003$)
&$(0,1,1,0,0)$&0.014 ($\pm 0.023$)\\
$(0,0,2,0,0)$&-0.106 ($\pm 0.014$)
&$(0,0,2,0,0)$&-0.006 ($\pm 0.002$)
&$(1,0,0,1,0)$&-0.010 ($\pm 0.022$)\\
$(2,0,0,0,0)$&-0.095 ($\pm 0.011$)
&$(0,2,0,0,0)$&-0.002 ($\pm 0.001$)
&$(0,1,0,0,1)$&0.008 ($\pm 0.024$)\\
$(0,2,0,0,0)$&-0.084 ($\pm 0.010$)
&$(1,0,2,0,0)$&-0.002 ($\pm 0.001$)
&$(0,0,0,1,1)$&-0.007 ($\pm 0.028$)\\
$(0,0,0,0,2)$&-0.043 ($\pm 0.013$)
&$(2,0,0,0,0)$&-0.001 ($\pm 0.001$)
&$(0,1,0,1,0)$&0.007 ($\pm 0.024$)\\
$(0,0,0,2,0)$&-0.043 ($\pm 0.010$)
&$(0,2,0,1,0)$&-0.000 ($\pm 0.001$)
&$(0,0,1,1,0)$&-0.006 ($\pm 0.023$)\\
$(1,0,2,0,0)$&-0.036 ($\pm 0.009$)
&$(0,0,1,0,2)$&-0.000 ($\pm 0.001$)
&$(1,0,1,0,0)$&-0.004 ($\pm 0.021$)\\
$(0,0,0,2,1)$&-0.015 ($\pm 0.015$)
&$(0,0,0,0,2)$&-0.000 ($\pm 0.001$)
&$(0,0,1,0,1)$&0.004 ($\pm 0.023$)\\
%
%&&&&\\
%\hline
%prediction
%& 3.37 ($\pm 0.33$)
%& prediction
%& 1.99 ($\pm 0.15$)
%& prediction
%& 1.00 ($\pm 0.56$)\\
\end{tabular}
}
 \end{center}
 \end{table}

\begin{table}[htbp]
\caption{
 A result for the digoxin data.
 The lasso-type estimators of SGM, MixM and the graphical Gaussian model are shown.
 Only non-zero values are displayed.
 For the Gaussian model, the estimated partial correlation of
 the pairs $\{1,2\},\{1,3\},\{2,3\}$ is displayed on the row $u=(1,1,0),(1,0,1),(0,1,1)$, respectively.
 The cross-validated predictive log-likelihood (referred to as CV prediction) is put on the bottom.
 For each model, the asterisk `$*$' indicates the optimal tuning parameter selected by CV prediction.
 }
\label{table:digoxin}
\begin{center}
{\footnotesize
\begin{tabular}{r|rr|rr|rr}
&\multicolumn{2}{|c|}{SGM}
&\multicolumn{2}{|c|}{MixM}
&\multicolumn{2}{|c}{Gaussian}\\
&$\tau=0.5$&$\tau=1.0^*$
&$\tau=0.5$&$\tau=1.0^*$
&$\tau=0.25^*$&$\tau=1.0$\\
\hline
&&&&&&\\
$(1,1,0)$&0.351&0.558&0.177&0.354&0.480&0.758\\
$(0,1,1)$&0.149&0.301&&&0.217&0.485\\
$(2,0,1)$&&-0.166&&&&\\
$(1,0,1)$&0.149&0.148&&&&-0.191\\
$u$\hspace{10pt}
$(0,0,2)$&-0.070&-0.147&&&&\\
$(0,2,0)$&&-0.088&&&&\\
$(1,0,2)$&&0.072&&&&\\
$(0,0,1)$&0.073&0.050&&&&\\
$(0,1,2)$&&-0.039&&&&\\
&&&&&&\\
\hline
%\# of selected $u$'s&
%5&9&1&1&&\\
%\hline
CV prediction&
11.19&\underline{14.54}
&6.95&12.26
&14.49&-0.92\\
\end{tabular}
}
\end{center}
\end{table}

\section{Discussion} \label{section:discussion}

We defined SGM as a set of the potential functions $\psi$
and studied its feasible region to calculate the constrained
maximum likelihood estimator.
SGM was applied to both simulated and real dataset.
We discuss remaining mathematical and practical problems.

We used the finite Fourier expansion to define the potential function $\psi$
as Eq.~(\ref{eqn:SGM}).
It is sometimes hard to describe local behavior of the density function
if we use this expansion.
For such purposes,
we can use wavelets instead of the cosine functions
as long as the resultant potential function satisfies the Neumann condition (\ref{eqn:neumann}).
For example, assume that we want to describe tail behavior
of two-dimensional data around $x=(1,1)$.
Then we can use a function
\[
 \psi(x|\theta,a)=(x_1^2+x_2^2)/2+\pi^{-2}\theta(2+\cos(\pi x_1)+\cos(\pi x_2))^{a},
\]
where $a>1/2$.
A typical shape of the density function $p(x|\theta,a)=\det(D^2\psi(x|\theta,a))$
is given in Figure~\ref{fig:tail-dens}.
One can confirm that
the gradient map $D\psi$ is continuous on $[0,1]^2$
and satisfies the Neumann condition~(\ref{eqn:neumann}).
A sufficient condition for convexity of $\psi$ is
$0\leq \theta\leq 2^{1-2a}/a$.
If $a<1$,
then the tail behavior of $p(x|\theta,a)$ is
\[
 p(x|\theta,a)\ \simeq\ \theta^2a^2(2a-1)
 \left(\frac{\pi^2}{2}\{(1-x_1)^2+(1-x_2)^2\}\right)^{2(a-1)}
\]
as $(x_1,x_2)\to (1,1)$.
The proofs of these facts are omitted.
Although estimation of $\theta$ is described by the determinant maximization,
that of $a$ is not.
Further investigation is needed.

\begin{figure}[htbp]
 \begin{center}
  \includegraphics[width=6cm,clip]{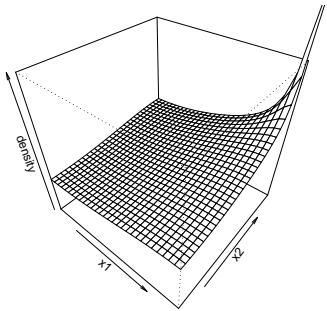}
 \end{center}
 \caption{The density function $p(x|\theta,a)$ for
 $a=0.75$ and $\theta=2^{1-2a}/a$.}
 \label{fig:tail-dens}
\end{figure}

If any covariates are available together with given data,
we can include the covariates
in the parameter $\theta$ of SGM.
However, since the parameter space $\Theta$ of SGM is not the whole Euclidean space,
its use is restricted.

The author recently proved an inequality on
Efron's statistical curvature,
in that the curvature of SGM at the origin $\theta=0$
is always smaller than that of MixM (\ref{eqn:mixture}).
This fact is not so practical
but it supports SGM.
Since the statement and the proof of this inequality are rather complicated,
we will present them in a forthcoming paper.

We constructed a lasso-type estimator on SGM
as a byproduct of the conservative feasible region in Section~\ref{section:mle}.
Performance of the estimator is numerically studied in Section~\ref{section:numerical}.
For the existing lasso estimators,
some asymptotic results are known
when the sample size $n$ and/or the number $m$ of variates increase
 (\cite{knight2000}, 
\cite{meinshausen2006}, 
\cite{yuan2007}, 
\cite{bunea2007}, 
\cite{banerjee2008}).
We think it is important to compare our SGM
with the Gaussian, mixture and exponential models
on the asymptotic argument.

\appendix

\section{Proofs}

\subsection{Proof of Lemma~\ref{lem:fundamental}}

 Let $\psi$ have the form (\ref{eqn:SGM}) and
 choose any $\theta$ such that $D^2\psi(x|\theta)\succeq 0$ for every $x\in[0,1]^m$.
 We prove that the gradient map $D\psi(\cdot|\theta)$ is a bijection on $[0,1]^m$.
 If $\theta=0$, then the bijectivity of $D\psi(x|\theta)=x$ is clear.
 Therefore we assume $\theta\neq 0$.
 We can extend the domain of $\psi(\cdot|\theta)$ from $[0,1]^m$ to whole $\mbR^m$
 by using Eq.~(\ref{eqn:SGM}),
 and denote the extended function by $\tilde{\psi}(x)=\tilde{\psi}(x|\theta)$ for $x\in\mbR^m$.
 Since $\tilde{\psi}(x)$ is a periodic and even function
 along each axis, the convexity condition $D^2\tilde{\psi}\succeq 0$
 holds over $x\in\mbR^m$.
 We will prove that
 (i) $D\tilde{\psi}$ is a bijection on $\mbR^m$
 and (ii) $D\tilde{\psi}$ is a bijection on each hyperplane $\{x\mid x_j=b\}$, where
 $j\in\{1,\ldots,m\}$ and $b\in\{0,1\}$.
 We first show that the bijectivity on $[0,1]^m$ follows from the conditions (i) and (ii).
 Indeed, if (i) and (ii) are fulfilled, then for each $j\in\{1,\ldots,m\}$
 the sandwiched region $\{x\in\mbR^m\mid 0\leq x_j\leq 1\}$ between two hyperplanes
 is mapped onto itself because $D\tilde{\psi}$ is continuous.
 Therefore $[0,1]^m$ is injectively mapped onto itself.
 To prove (i), it is sufficient to show that $\tilde{\psi}$ is strictly convex and co-finite:
 $\lim_{\lambda\to\infty}\tilde{\psi}(\lambda x)/\|x\|=0$
 whenever $x\neq 0$
 (see Theorem 26.6 of \cite{rockafeller1970}).
 We define a function $f(z)$ of $z\in\mbR$
 by $f(z)=\tilde{\psi}(x_0+ze)$,
 where $x_0\in\mbR^m$ and $e\in\mbR^m\setminus\{0\}$ are arbitrary.
 Then $f''(z)\geq 0$ for any $z$
 since $D^2\tilde{\psi}(x)\succeq 0$ for any $x\in\mbR^m$.
 However, since $f''(z)$ is a non-constant analyitc function (recall that $\theta\neq 0$),
 $f''(z)$ must be positive except for a finite number of $z$ for each bounded interval.
 Hence $f$, and therefore $\tilde{\psi}$, is strictly convex.
 The co-finiteness of $\tilde{\psi}$
 is immediate because $\tilde{\psi}$ is sum of $x^{\top}x/2$
 and a bounded function.
 Hence (i) was proved.
 Next we prove the condition (ii).
 We consider the hyperplane $\{x\mid x_m=b\}$, where $b\in\{0,1\}$,
 without loss of generality.
 Denote the restriction of $\tilde{\psi}$ to $\{x\mid x_m=b\}$
 by $\tilde{\psi}_{m-1}$.
 Then $\tilde{\psi}_{m-1}$ has the following expression
 \[
 \tilde{\psi}_{m-1}(x_1,\ldots,x_{m-1})
  \ =\ \frac{b^2}{2} + \frac{1}{2}\sum_{i=1}^{m-1}x_i^2
  - \sum_{u\in\cU}\pi^{-2}\theta_u(-1)^{u_jb}\prod_{i=1}^{m-1}\cos(\pi u_jx_j).
 \]
 This function is the same form as Eq.~(\ref{eqn:SGM}) with the dimension $m-1$.
 The convexity condition $(\Rd^2\tilde{\psi}_{m-1}/\Rd x_i\Rd x_j)\succeq 0$
 is also satisfied because $\tilde{\psi}_{m-1}$ is a restriction of $\tilde{\psi}$.
 Thus (ii) is proved in the same manner as the proof of (i).

\subsection{Proof of Lemma~\ref{lem:SGM-is-not}}

A statistical model is a mixture model if and only if
all the second derivatives of the density function
with respect to the parameter vanish.
Hence we calculate the second derivative of the density function of SGM.
Put $\mbZ_i:=\{u\in\mbZ_{\geq 0}^m\mid u_j=0\ \forall j\neq i\}$.
If $\cU\subset\mbZ_i$
for some $i$,
then it is easy to confirm that SGM becomes a mixture model
\[
 p(x|\theta)\ =\ 1+\sum_{u\in\cU}\theta_uu_i^2\cos(\pi u_ix_i).
\]
Hence we assume that $\cU\not\subset\mbZ_i$ for any $i$.
Then there exist $u,v\in\cU$ (the case $u=v$ is available) such that $|\sigma(u)\cup\sigma(v)|\geq 2$,
where $\sigma(u)=\{j\mid u_j>0\}$.
Putting $A_u=\{D^2\psi(x|\theta)\}^{-1}\{\Rd/\Rd\theta_u(D^2\psi(x|\theta))\}$ we have
\[
 \frac{\Rd^2p(x|\theta)}{\Rd\theta_u\Rd\theta_v}
  \ =\ \mathop{\rm tr}A_u\mathop{\rm tr}A_v-\mathop{\rm tr}[A_uA_v].
\]
Since $A_u|_{\theta=0,x=0}=\mathrm{diag}(u_1^2,\ldots,u_m^2)$, we have
\[
 \left.\frac{\Rd^2p(x|\theta)}{\Rd \theta_u\Rd \theta_v}\right|_{\theta=0,x=0}
  \ =\ \|u\|^2\|v\|^2-\sum_{i}u_i^2v_i^2
  \ =\ \sum_{i}\sum_{j\neq i}u_i^2v_j^2
  \ >\ 0,
\]
where the last inequality follows from $|\sigma(u)\cup\sigma(v)|\geq 2$.
Thus SGM is not a mixture model
as long as $\cU\not\subset\mbZ_i$ for any $i$.

\subsection{Proof of Lemma\ref{lem:SGM-and-MixM}}

The score function of SGM at $\theta=0$ is directly calculated as
\[
 L_u
 \ :=\ \left.\frac{\Rd}{\Rd\theta_u}\log p(x|\theta)\right|_{\theta=0}
 \ =\ \|u\|^2\prod_{j=1}^m\cos(\pi u_jx_j).
\]
The score function of MixM is also easily
proved to be $L_u$.
Then the Fisher information matrix of both the models is
\[
 J_{uv}
 \ =\ \int p(x|0)L_uL_v\mathrm{d}x
 \ =\ \|u\|^2\|v\|^2
 \prod_{j=1}^m \int_0^1 \cos(\pi u_jx_j)\cos(\pi v_jx_j)\mathrm{d}x_j.
\]
Here the integral is calculated by the following formula
\[
 \int_0^1\cos(\pi u_jx_j)\cos(\pi v_jx_j)\mathrm{d}x_j
 \ =\ \left\{
 \begin{array}{ll}
  1& \mbox{if}\ u_j=v_j=0,\\
  1/2& \mbox{if}\ u_j=v_j>0,\\
  0& \mbox{if}\ u_j\neq v_j.
 \end{array}
 \right.
\]

\subsection{Proof of Equations (\ref{eqn:Fisher-explicit-1}) and (\ref{eqn:Fisher-explicit-2})}

We first prove Eq.~(\ref{eqn:Fisher-explicit-1}).
Let $m=1$ and $\cU=\{u\}$.
We only consider the case $u=1$.
The other cases are similarly proved.
Put $\theta=\theta_{1}$.
Since $p(x_1|\theta)=1+\theta \cos(\pi x_1)$,
we have
\[
 J_{uu}(\theta)
 \ =\ \int_0^1\frac{\cos^2(\pi x_1)}{1+\theta \cos(\pi x_1)}\mathrm{d}x_1.
\]
By putting $z=\exp(\mathrm{i}\pi ux_1)$, we obtain
\[
 J_{uu}(\theta)
 \ =\ \frac{1}{2\pi\mathrm{i}}
  \oint_{|z|=1}\frac{(z+z^{-1})^2/4}{1+\theta(z+z^{-1})/2}\frac{\mathrm{d}z}{z}
 \ =\ \frac{1}{4\pi\mathrm{i}}
  \oint_{|z|=1}\frac{(z^2+1)^2}{z^2(\theta z^2+2z+\theta)}\mathrm{d}z.
\]
The poles of the integrand inside the unit circle are $0$ and $z_+$,
where $z_{\pm}:=(-1\pm\sqrt{1-\theta^2})/\theta$.
By the residue theorem, we obtain
\begin{eqnarray*}
 J_{uu}(\theta)
 \ =\ \frac{1}{2}\left(\frac{-2}{\theta^2}\right)
  +\frac{1}{2}
  \frac{(z_+^2+1)^2}{z_+^2\theta(z_+-z_-)}
%  \\
% &=& \frac{-1}{\theta^2}
% +\frac{1}{\theta^2\sqrt{1-\theta^2}}
%  \\
 \ =\ \frac{1-\sqrt{1-\theta^2}}{\theta^2\sqrt{1-\theta^2}}.
\end{eqnarray*}
This proves Eq.~(\ref{eqn:Fisher-explicit-1}).

We next prove Eq.~(\ref{eqn:Fisher-explicit-2}).
Put $u=(1,1)$ and $\theta=\theta_u$.
We use the following identity
\begin{eqnarray*}
 p(x|\theta)
 &=& \det\left(
  \begin{array}{cc}
   1+\theta\cos(x_1)\cos(x_2)& -\theta\sin(x_1)\sin(x_2) \\
  -\theta\sin(x_1)\sin(x_2)& 1+\theta\cos(x_1)\cos(x_2)
  \end{array}
 \right)
  \\
 &=& (1+\theta\cos(\pi (x_1-x_2)))(1+\theta\cos(\pi (x_1+x_2))).
\end{eqnarray*}
The Fisher information is
\begin{eqnarray*}
 J_{uu}(\theta)
  &=& \int_{[0,1]^2}\left(
   \frac{\cos^2(\pi (x_1-x_2))}{1+\theta\cos(\pi (x_1-x_2))}
   + \frac{\cos^2(\pi (x_1+x_2))}{1+\theta\cos(\pi (x_1+x_2))}
   \right)
 \mathrm{d}x_1\mathrm{d}x_2
   \\
  &=& \frac{1}{4}\int_{[-1,1]^2}\left(
   \frac{\cos^2(\pi (x_1-x_2))}{1+\theta\cos(\pi (x_1-x_2))}
   + \frac{\cos^2(\pi (x_1+x_2))}{1+\theta\cos(\pi (x_1+x_2))}
   \right)
 \mathrm{d}x_1\mathrm{d}x_2
   \\
  &=& \frac{1}{4}\int_{[-1,1]^2}
  \left(
   \frac{\cos^2(\pi y_1)}{1+\theta\cos(\pi y_1)}
   +\frac{\cos^2(\pi y_2)}{1+\theta\cos(\pi y_2)}
  \right)
 \mathrm{d}y_1\mathrm{d}y_2
\end{eqnarray*}
where the last equality follows from
the transformation $y_1=x_1-x_2$ and $y_2=x_1+x_2$,
and from the periodicity of the integrand.
Then (\ref{eqn:Fisher-explicit-2}) is proved in the same manner as the proof of
(\ref{eqn:Fisher-explicit-1}).

\subsection{Proof of Lemma~\ref{lem:interior}}

We use the following elementary lemma.
Put $\cS=\{A\succeq 0\mid \mathop{\rm tr}A=1\}$.
Note that $\cS$ is compact.

\begin{lem} \label{lem:positive-definite-cond}
 Let $X$ be a real symmetric matrix.
 Then the minimum eigenvalue of $X$ is given by
 $\min_{A\in\cS}\mathop{\rm tr}(AX)$.
\end{lem}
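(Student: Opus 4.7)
The plan is to prove the two inequalities $\min_{A\in\cS}\mathop{\rm tr}(AX)\geq \lambda_{\min}(X)$ and $\min_{A\in\cS}\mathop{\rm tr}(AX)\leq \lambda_{\min}(X)$ separately, using only the spectral theorem for real symmetric matrices.

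First I would fix a spectral decomposition $X=\sum_i \lambda_i v_iv_i^\top$, where $\lambda_1\leq\cdots\leq \lambda_n$ are the eigenvalues of $X$ and $\{v_i\}$ is an orthonormal basis of eigenvectors. For any $A\in\cS$, writing $A$ in this basis gives
\[
 \mathop{\rm tr}(AX)\ =\ \sum_i \lambda_i\,(v_i^\top A v_i).
\]
The coefficients $\alpha_i:=v_i^\top A v_i$ are non-negative because $A\succeq 0$, and they sum to $\mathop{\rm tr}(A)=1$ since $\{v_i\}$ is orthonormal. Therefore $\mathop{\rm tr}(AX)$ is a convex combination of the eigenvalues of $X$, so $\mathop{\rm tr}(AX)\geq \lambda_{\min}(X)$ for every $A\in\cS$.

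For the reverse inequality, I would simply exhibit a minimizer: take $A_\star=v_1v_1^\top$, where $v_1$ is a unit eigenvector corresponding to $\lambda_{\min}(X)=\lambda_1$. Then $A_\star\succeq 0$ and $\mathop{\rm tr}(A_\star)=\|v_1\|^2=1$, so $A_\star\in\cS$; and
\[
 \mathop{\rm tr}(A_\star X)\ =\ v_1^\top X v_1\ =\ \lambda_{\min}(X).
\]
Combining the two bounds gives the claimed equality, and compactness of $\cS$ (already noted in the paper) ensures that the infimum is indeed attained, justifying the use of $\min$.

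There is essentially no obstacle here: the argument is a standard variational characterization of the smallest eigenvalue, and the only technical point is the passage from an attained value at $A_\star$ to the word ``minimum,'' which is handled by compactness of $\cS$ together with continuity of $A\mapsto \mathop{\rm tr}(AX)$.
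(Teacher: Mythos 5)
Your proof is correct and follows essentially the same route as the paper: spectral decomposition of $X$, the observation that $\mathop{\rm tr}(AX)=\sum_i\lambda_i(v_i^\top Av_i)$ is a convex combination of eigenvalues for $A\in\cS$, and attainment at the rank-one projector onto a minimizing eigenvector. The only difference is cosmetic (you separate the two inequalities and note compactness explicitly), so there is nothing to add.
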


\begin{proof}
 Let $X=\sum_{i}\xi_ie(i)e(i)^{\top}$
 be the spectral decomposition of $X$,
 where $\xi_1\leq\cdots\leq\xi_m$ and $e(i)^{\top}e(i)=1$.
 For any $A\in\cS$,
 \[
  \mathop{\rm tr}(AX)
 \ =\ \sum_{i}\xi_i(e(i)^{\top}Ae(i))
 \ \geq\ \xi_1\sum_{j}(e(j)^{\top}Ae(j))
 \ =\ \xi_1.
 \]
 The equality is attained at $A=e(1)e(1)^{\top}$.
\end{proof}

Let $H_u(x)=D^2(-\pi^{-2}\prod_{j=1}^m\cos(\pi u_jx_j))$.
Then
\[
 D^2\psi(x|\theta)\ =\ I+\sum_{u\in\cU}\theta_u H_u(x).
\]
The minimum eigenvalue $\rho_{\min}(\theta)$ of $D^2\psi(x|\theta)$
minimized over $x\in[0,1]^m$ is
\begin{eqnarray*}
 \rho_{\min}(\theta)
 &=& 1+\min_{x\in [0,1]^m,A\in\cS}\sum_{u\in\cU}\theta_u\mathop{\rm tr}(AH_u(x)).
\end{eqnarray*}
Recall that the parameter space $\Theta$
is expressed as $\Theta=\{\theta\in\mbR^{\cU}\mid \rho_{\min}(\theta)\geq 0\}$.
We prove that the interior of $\Theta$ is $\Theta^{\circ}=\{\theta\in\mbR^{\cU}\mid \rho_{\min}(\theta)>0\}$.
Put
\[
 \mu\ =\ \max_{u\in\cU}\max_{x\in[0,1]^m}\max_{A\in\cS}|\mathop{\rm tr}(AH_u(x))|
 \ <\ \infty.
\]
We first prove that if $\rho_{\min}(\theta)>0$, then $\theta\in\Theta^{\circ}$.
Indeed, if $\eta\in\mbR^{\cU}$ is sufficiently small, then
\[
 \rho_{\min}(\theta+\eta)
 \ \geq\ \rho_{\min}(\theta)-\mu\sum_{u\in\cU}|\eta_u|
 \ \geq\ 0.
\]
We next prove that if $\rho_{\min}(\theta)=0$, then $\theta\in\Theta\setminus\Theta^{\circ}$.
Since $\rho_{\min}(\theta)=0$,
there exist some $A\in\cS$ and some $x\in[0,1]^m$
such that $\mathop{\rm tr}(AD^2\psi(x|\theta))=0$.
For such an $x$,
there exists some $v\in\cU$ such that
$\theta_v\mathop{\rm tr}(AH_v(x))<0$.
Define a vector $\eta\in\mbR^{\cU}$ by $\eta_u=\theta_v 1_{\{u=v\}}$.
Then, for any $\epsilon>0$, we have
\[
 \rho_{\min}(\theta+\epsilon\eta)
 \ \leq\ \mathop{\rm tr}(AD^2\psi(x|\theta+\epsilon\eta))
 \ =\ \epsilon\theta_v\mathop{\rm tr}(AH_v(x))
 \ <\ 0.
\]
This implies that $\theta$ is a boundary point of $\Theta$.
Hence Lemma~\ref{lem:interior} was proved.

\subsection{Proof of Theorem~\ref{thm:inner-approx}}

We first recall some notations.
We use $[m]=\{1,\ldots,m\}$ and $L_M=\{\frac{0}{M},\frac{1}{M},\cdots,\frac{M}{M}\}$.
The supremum norm of $s\in\mbZ^m$ is defined by $\|s\|_{\infty}:=\max_{j}|s_j|$.
Recall that $U_{\max}=\max_{u\in\cU}\|u\|_{\infty}$.
We denote $U=U_{\max}$ for simplicity.
Recall that $K_M$ is a linear map on $\mbR^{\cU}$
defined by $K_M\theta=(\theta_u/\prod_{j=1}^m(1-u_j/M))_{u\in\cU}$.

Define a set $K_M^{-1}\Theta^{\circ}$ by
\[
 K_M^{-1}\Theta^{\circ}
 \ :=\ \{K_M^{-1}\theta\mid \theta\in\Theta^{\circ}\}
 \ =\ \{\theta\mid D^2\psi(x|K_M\theta)\succ 0\quad \forall x\in[0,1]^m\}.
\]
Then we have $K_M^{-1}\Theta^{\circ}\ \subset\ \Theta_M^{\circ}$
by the definition of $\Theta_M^{\circ}$.
Hence, the theorem follows from the following two claims.
\begin{itemize}
 \item[(i)] $\displaystyle\limsup_{M\to\infty}K_M^{-1}\Theta^{\circ}=\Theta^{\circ}$.
 \item[(ii)] $\Theta_M^{\circ}\subset\Theta^{\circ}$ for any $M$.
\end{itemize}

We first prove (i).
Put $\cS=\{A\succeq 0\mid \mathop{\rm tr}A=1\}$
and $f(x|\theta,A)=\mathop{\rm tr}[AD^2\psi(x|\theta)]$.
By Lemma\ref{lem:positive-definite-cond} and compactness of $[0,1]^m\times \cS$,
a vector $\theta$ belongs to $\Theta^{\circ}$ if and only if
\[
 \min_{x\in [0,1]^m,A\in\cS}f(x|\theta,A)\ >\ 0.
\]
Now it is sufficient to prove that, for any $\theta\in\mbR^{\cU}$, $f(x|K_M\theta,A)$
converges to $f(x|\theta,A)$ uniformly in $x\in[0,1]^m$ and $A\in\cS$.
Let $H_u(x):=D^2(-\pi^{-2}\prod_{j=1}^m\cos(\pi u_jx_j))$.
Then we have $f(x|\theta,A)=1+\sum_{u\in\cU}\theta_u\mathop{\rm tr}[AH_u(x)]$
and therefore
\begin{eqnarray}
  |f(x|K_M\theta,A)-f(x|\theta,A)|
 &\leq&
 \sum_{u\in\cU}
 \left|\{(K_M\theta)_u-\theta_u\}\mathop{\rm tr}[AH_u(x)]\right|.
 \label{eqn:inner-theorem-proof-1}
 \end{eqnarray}
 Since the function $\mathop{\rm tr}[AH_u(x)]$ of $(x,A)\in[0,1]^m\times \cS$
 is bounded and since $(K_M\theta)_u$ converges to $\theta_u$ for each $u\in\cU$
 as $M\to\infty$,
 the right hand side of (\ref{eqn:inner-theorem-proof-1}) converges to $0$
 uniformly in $x$ and $A$.

 Next we prove (ii).
 Let $R_M=\{-\frac{M-1}{M},\ldots,\frac{M-1}{M},\frac{M}{M}\}$.
 We extend the domain of $\psi$ from $[0,1]^m$ to $\mbR^m$
 as done in the proof of Lemma~\ref{lem:fundamental},
 and denote it again by $\psi$.
 If $\theta\in\Theta_M^{\circ}$,
 then $D^2\psi(\xi|K_M\theta)$
 is positive definite for any $\xi\in R_M^m$
 because $\psi(x|\theta)$ is an even function
 with respect to each coordinate $x_j$.
 Then it is sufficient to prove that
 $D^2\psi(x|\theta)$ for any $x$
 is written as a convex combination of
 $\{D^2\psi(\xi|K_M\theta)\}_{\xi\in R_M^m}$.
 Define a Fej\'er-type kernel $Q_M$ by
 \[
  Q_M(z)\ =\ \frac{1}{2M^2}\sum_{a=0}^{M-1}\sum_{b=0}^{M-1}\mathrm{e}^{\mathrm{i}\pi (a-b)z}
  \ =\ \frac{1}{2M^2}\left(\frac{\sin(\pi Mz/2)}{\sin(\pi z/2)}\right)^2.
 \]
 Then the following lemma holds.

\begin{lem}
 For any $M\geq U+1$, we have
 \[
 D^2\psi(x|\theta)\ =\ 
 \sum_{\xi\in R_M^m}D^2\psi(\xi|K_M\theta)
 \prod_{j=1}^m Q_M(x_j-\xi_j).
 \]
 The right hand side
 is a convex combination of
 $\{D^2\psi(\xi|K_M\theta)\}_{\xi\in R_M^m}$.
\end{lem}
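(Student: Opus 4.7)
The plan is to verify the matrix identity entry by entry and reduce it, via linearity and the product structure of the interpolation kernel, to a one-dimensional Fejér-type interpolation formula. Write $D^2\psi(x|\theta)=I+\sum_{u\in\cU}\theta_u H_u(x)$ with $H_u(x)=D^2(-\pi^{-2}\prod_j\cos(\pi u_jx_j))$. Since $K_M$ is linear and (as will be shown) $\sum_{\xi\in R_M^m}\prod_j Q_M(x_j-\xi_j)=1$, equating coefficients of $\theta_u$ on both sides reduces the claim to showing that, for each $u\in\cU$,
\[
\sum_{\xi\in R_M^m}H_u(\xi)\prod_{j=1}^m Q_M(x_j-\xi_j)\ =\ \prod_{j=1}^m(1-u_j/M)\cdot H_u(x),
\]
since $(K_M\theta)_u=\theta_u/\prod_j(1-u_j/M)$ is designed precisely to cancel the attenuation on the right.

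Next, I would compute $H_u$ explicitly: $H_u(x)_{ii}=u_i^2\prod_j\cos(\pi u_jx_j)$ and, for $i\neq k$, $H_u(x)_{ik}=-u_iu_k\sin(\pi u_ix_i)\sin(\pi u_kx_k)\prod_{j\neq i,k}\cos(\pi u_jx_j)$. Each entry is a product over $j$ of either $\cos(\pi u_j\,\cdot)$ or $\sin(\pi u_j\,\cdot)$. Because $\prod_j Q_M(x_j-\xi_j)$ factorizes over $j$, the multidimensional sum factors into one-dimensional sums, and the matrix identity above reduces to two one-variable identities valid for $0\leq n\leq M-1$:
\[
\sum_{\xi\in R_M}\cos(\pi n\xi)\,Q_M(x-\xi)=(1-n/M)\cos(\pi nx),\quad
\sum_{\xi\in R_M}\sin(\pi n\xi)\,Q_M(x-\xi)=(1-n/M)\sin(\pi nx).
\]

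I would prove these by expanding $Q_M$ from its double-sum definition into its Fourier form $Q_M(z)=\tfrac{1}{2M}\sum_{|k|\leq M-1}(1-|k|/M)e^{i\pi kz}$ and applying the elementary discrete orthogonality $\sum_{\xi\in R_M}e^{i\pi r\xi}=2M\,\mathbb{1}\{r\equiv 0\pmod{2M}\}$ for integers $r$. Writing cosine and sine as sums of exponentials, the inner sum over $\xi$ picks out only $k=\pm n$; the requirement that no aliasing occur forces $|n\pm k|<2M$, which holds precisely because $U\leq M-1$, i.e.\ $M\geq U+1$. Substituting back yields the attenuated reproduction formulas above. Applying these coordinatewise, and recombining the product, proves the required per-$u$ identity and hence the full Hessian identity.

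Finally, the convex combination assertion follows from two observations: $Q_M\geq 0$ is immediate from its squared-sinc form, so $\prod_j Q_M(x_j-\xi_j)\geq 0$; and the $k=0$ case of the orthogonality above gives $\sum_{\xi\in R_M}Q_M(x-\xi)=1$, hence $\sum_{\xi\in R_M^m}\prod_j Q_M(x_j-\xi_j)=1$ by the product structure. The main obstacle is purely bookkeeping — keeping track of sines and cosines in the off-diagonal entries and verifying the sharp cutoff $|n\pm k|<2M$ that underlies the no-aliasing argument; once the one-dimensional interpolation identities are established, the rest is assembly.
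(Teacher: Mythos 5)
Your proof is correct. The computational engine is the same as the paper's---expand in complex exponentials, apply the discrete orthogonality $\sum_{\xi\in R_M}e^{\mathrm{i}\pi r\xi}=2M\,1_{\{r\equiv 0\ (\mathrm{mod}\ 2M)\}}$, and observe that the attenuation factor produced by the Fej\'er kernel is exactly cancelled by $K_M$---but your decomposition is genuinely different. The paper factors $K_M=\prod_jK_{M,j}$ and proves a one-coordinate-at-a-time interpolation identity for the whole matrix, writing $D^2\psi(x|\theta)=\sum_{\|s\|_\infty\le U}F_s e^{\mathrm{i}\pi s^{\top}x}$ with matrix-valued Fourier coefficients $F_s$ and counting the pairs $(a,b)$ with $a-b=s_1$ to get the factor $M-|s_1|$. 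You instead work entrywise: each entry of $H_u$ is a pure product over $j$ of $\cos(\pi u_j\,\cdot)$ or $\sin(\pi u_j\,\cdot)$, so the sum over $R_M^m$ tensor-factors into $m$ scalar one-dimensional reproduction identities $\sum_{\xi\in R_M}\cos(\pi n\xi)Q_M(x-\xi)=(1-n/M)\cos(\pi nx)$ (and the sine analogue), each proved from the Fourier form $Q_M(z)=\tfrac{1}{2M}\sum_{|k|\le M-1}(1-|k|/M)e^{\mathrm{i}\pi kz}$. Your route avoids the $F_s$ bookkeeping and the telescoping over coordinates, at the cost of computing the entries of $H_u$ explicitly and tracking the sine/cosine cases; both hinge on the same no-aliasing bound $|{\pm}n-k|<2M$, which is where $M\ge U+1$ enters, and both establish the convexity claim identically (non-negativity from the squared-sine form, total mass one from the $k=0$ term). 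No gaps.
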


\begin{proof}
 For each $j\in\{1,\ldots,m\}$,
 define an operator $K_{M,j}$ on $\mbR^{\cU}$
 by 
 \[
 (K_{M,j}\theta)_u=\frac{\theta_u}{1-u_j/M}.
 \]
 Then we have $K_M=\prod_{j=1}^mK_{M,j}$ from the definition.
 It is sufficient to show that
 \begin{equation}
  D^2\psi(x|\theta)
 \ =\ \sum_{\xi_j\in R_M}D^2\psi(\xi_j,x_{\setminus j}|K_{M,j}\theta)
 Q_M(x_j-\xi_j),
 \label{eqn:inner-theorem-proof-2}
 \end{equation}
 where $x_{\setminus j}=(x_1,\ldots,x_{j-1},x_{j+1},\ldots,x_m)$.
 In fact, if (\ref{eqn:inner-theorem-proof-2}) is proved,
 then
 \begin{eqnarray*}
  D^2\psi(x|\theta)
   &=& \sum_{\xi_1\in R_M}
   D^2\psi(\xi_1,x_2,\ldots,x_m|K_{M,1}\theta)
   Q_M(x_1-\xi_1)
    \\
   &=& \sum_{\xi_1\in R_M}
   \sum_{\xi_2\in R_M}
   D^2\psi(\xi_1,\xi_2,\ldots,x_M|K_{M,1}K_{M,2}\theta)
   \prod_{j=1}^2Q_M(x_j-\xi_j)
    \\
   &=& \cdots
    \\
   &=& \sum_{\xi\in R_M^m}
   D^2\psi(\xi|K_M\theta)
   \prod_{j=1}^m Q_M(x_j-\xi_j).
 \end{eqnarray*}
 We prove (\ref{eqn:inner-theorem-proof-2})
 for $j=1$ without loss of generality.
 We first describe $D^2\psi(x|\theta)$
 in terms of $\{\mathrm{e}^{\mathrm{i}\pi s^{\top}x}\}_{s\in\mbZ^m}$.
 For each $s\in\mbZ^m$, we define a $m\times m$ matrix
 \[
  F_s
  \ =\ \left\{
  \begin{array}{ll}
   I& \mbox{if}\ s=0,
    \\
   \theta_u2^{-|\sigma(u)|}ss^{\top}
    & \mbox{if}\ |s_j|=u_j\ \mbox{for\ all}\ j\in[m]\ \mbox{for\ some}\ u\in\cU,
    \\
   0& \mbox{otherwise}.
  \end{array}
  \right.
 \]
 Recall that $\sigma(u)=\{j\in[m]\mid u_j>0\}$.
 Then,
 by applying the Euler's formula $\cos(\pi u_jx_j)=(\mathrm{e}^{\mathrm{i}\pi u_jx_j}-\mathrm{e}^{-\mathrm{i}\pi u_jx_j})/2$
 to Eq.~(\ref{eqn:SGM}),
 we can show that
 \[
 D^2\psi(x|\theta)
 \ =\ \sum_{\|s\|_{\infty}\leq U}F_s\mathrm{e}^{\mathrm{i}\pi s^{\top}x}.
 \]
 Recall that $U=\max_{u\in\cU}\|u\|_{\infty}$.
 The right hand side of (\ref{eqn:inner-theorem-proof-2}) with $j=1$ is
 \begin{eqnarray*}
 \lefteqn{
  \sum_{\xi_1\in R_M}D^2\psi(\xi_1,x_{\setminus 1}|K_{M,1}\theta)Q_M(x_1-\xi_1)
 }\\
   &=&
  \sum_{\xi_1\in R_M}
  \left(
   \sum_{\|s\|_{\infty}\leq U}\frac{F_{s}\mathrm{e}^{\mathrm{i}\pi (s_1\xi_1+s_{\setminus 1}^{\top}x_{\setminus 1})}}{1-|s_1|/M}
  \right)
  \left(
   \frac{1}{2M^2}
   \sum_{a=0}^{M-1}\sum_{b=0}^{M-1}\mathrm{e}^{\mathrm{i}\pi (a-b)(x_1-\xi_1)}
  \right)
  \\
  &=&
  \sum_{a=0}^{M-1}\sum_{b=0}^{M-1}
  \sum_{\|s\|_{\infty}\leq U}
  \frac{F_s\mathrm{e}^{\mathrm{i}\pi ((a-b)x_1+s_{\setminus 1}^{\top}x_{\setminus 1})}}{M-|s_1|}
  \ \frac{1}{2M}
  \sum_{\xi_1\in R_M}\mathrm{e}^{\mathrm{i}\pi (s_1-a+b)\xi_1}
  \\
  &=&
  \sum_{a=0}^{M-1}\sum_{b=0}^{M-1}\sum_{\|s\|_{\infty}\leq U}
  \frac{F_s\mathrm{e}^{\mathrm{i}\pi s^{\top}x}}{M-|s_1|}
  1_{\{s_1\equiv a-b\ \mathrm{mod}\ 2M\}}.
 \end{eqnarray*}
 For any $s_1$ with $|s_1|\leq U<M$,
 the cardinality of the set
 \[
  \{(a,b)\in\{0,\ldots,M-1\}^2\mid s_1=a-b\}
 \]
 is $M-|s_1|$.
 Hence we have
 \[
  \sum_{\xi_1\in R_M}D^2\psi(\xi_1,x_{\setminus 1}|K_{M,1}\theta)Q_M(x_1-\xi_1)
 \ =\ \sum_{\|s\|_{\infty}\leq U} F_s\mathrm{e}^{\mathrm{i}\pi s^{\top}x}
 \ =\ D^2\psi(x|\theta).
 \]
 Therefore (\ref{eqn:inner-theorem-proof-2}) was proved.

 Now we prove that
 $\{\prod_{j=1}^mQ_M(x_j-\xi_j)\}_{\xi\in R_M^m}$
 becomes a probability vector.
 In fact, non-negativity follows from the definition of $Q_M$
 and the total mass is $1$ because
 \[
  \sum_{\xi_1\in R_M}Q_M(x_1-\xi_1)
  \ =\ \frac{1}{2M^2}\sum_{a=0}^{M-1}\sum_{b=0}^{M-1}\sum_{\xi_1\in R_M}
  \mathrm{e}^{\mathrm{i}\pi (a-b)(x_1-\xi_1)}
 \ =\ \frac{1}{M}\sum_{a=0}^{M-1}\sum_{b=0}^{M-1}1_{\{a=b\}}
 \ =\ 1.
 \]
 Therefore the lemma and Theorem~\ref{thm:inner-approx} are proved.
\end{proof}

\subsection{Proof of Theorem~\ref{thm:little}}

 Let $\theta\in\Theta^{\rm lit}$.
 We show that $D^2\psi(x|\theta)\succeq 0$ for all $x\in[0,1]^m$.
 By Euler's formula, we obtain
 \[
  \prod_{j=1}^m\cos(\pi u_jx_j)
 \ =\ 2^{-m}\sum_{\alpha\in\{-1,1\}^m}\cos(\pi \alpha^{\top}d(u)x),
 \]
 where $d(u)$ is the $m\times m$ diagonal matrix with the diagonal vector $u$.
 Note that
 $2^{-m}\sum_{\alpha\in\{-1,1\}^m}\alpha\alpha^{\top}=I$.
 Then
\begin{eqnarray*}
 D^2\psi(x|\theta)
  &=& I+
   \sum_{u\in\cU}\frac{\theta_u}{2^m}\sum_{\alpha\in\{-1,1\}^m}
    \cos(\pi \alpha^{\top}d(u)x)d(u)\alpha\alpha^{\top}d(u)
   \\
  &\succeq& I - \sum_{u\in\cU}\frac{|\theta_u|}{2^m}
  \sum_{\alpha\in\{-1,1\}^m} d(u)\alpha\alpha^{\top}d(u)
   \\
  &=& I - \sum_{u\in\cU}|\theta_u|d(u)^2
   \\
  &\succeq& 0.
\end{eqnarray*}
 This implies that $\theta\in\Theta$.

 Next we assume that $\cV\subset\cU$ is linearly independent modulo 2.
 Since $\Theta^{\rm lit}\subset\Theta$, it is sufficient to prove that $\Theta\cap\mbR_{\cV}\subset\Theta^{\rm lit}\cap\mbR_{\cV}$.
 Let $\theta\in\Theta\cap\mbR_{\cV}$.
 We evaluate $D^2\psi(x|\theta)$ at lattice points $\xi\in\{0,1\}^m$.
 For any $\xi\in\{0,1\}^m$ and any $v\in \mbZ^m$, we have
 \[
 \left.D^2\left(-\pi^{-2}\prod_{j=1}^m\cos(\pi v_jx_j)\right)\right|_{x=\xi}
 = (-1)^{v^{\top}\xi}d(v)^2.
 \]
 Since $\cV$ is linearly independent modulo 2,
 we can choose $\xi\in\{0,1\}^m$ such that $v^{\top}\xi=1_{\{\theta_v>0\}}$ (mod $2$)
 for all $v\in\cV$.
 Then
 \begin{eqnarray*}
  0\ \preceq\ \left.D^2\psi(x|\theta)\right|_{x=\xi}
 \ =\ 1 + \sum_{v\in\cV}\theta_v(-1)^{v^{\top}\xi}d(v)^2
 \ =\ 1 - \sum_{v\in\cU}|\theta_v|d(v)^2.
 \end{eqnarray*}
 This means $\theta\in\Theta^{\rm lit}\cap\mbR_{\cV}$.

%\subsection{Proof of Equation~(\ref{eqn:tail-feas})}
%
%The Hessian matrix of $\psi$ is
%\begin{eqnarray*}
% D^2\psi(x|\theta)
% &=& 
% \left(
% \begin{array}{cc}
%  1&0 \\
%  0&1
% \end{array}
% \right)
% - \theta ar^{a-1}\left(
%  \begin{array}{cc}
%   c_1&0 \\
%   0&c_2
%  \end{array}\right)
% +\theta a(a-1)r^{a-2}\binom{s_1}{s_2}(s_1\ s_2),
%\end{eqnarray*}
%where we put $c_i=\cos(\pi x_i)$, $s_i=\sin(\pi x_i)$
%and $r=2+c_1+c_2$.
%Assume that $0\leq \theta\leq 2^{1-2a}/a$ and $a>1/2$.
%We assume $c_1\geq c_2$ without loss of generality.
%Then $D^2\psi(x|\theta)$ is bounded from below:
%\begin{eqnarray*}
% D^2\psi(x|\theta)
%  &\succeq&
% \left(
% \begin{array}{cc}
%  1&0 \\
%  0&1
% \end{array}
% \right)
% - \theta ar^{a-1}\left(
%  \begin{array}{cc}
%   c_1&0 \\
%   0&c_1
%  \end{array}\right)
% +\theta a(a-1)r^{a-2}\binom{s_1}{s_2}(s_1\ s_2),
%\end{eqnarray*}
%The minimum eigenvalue of the right hand side is
%evaluated as
%\begin{eqnarray*}
% \lefteqn{
% 1-\theta ar^{a-1}c_1
% +\theta a(a-1)r^{a-2}(s_1^2+s_2^2)
% }
% \\
% &\geq&
% 1-\theta ar^{a-1}c_1
% -\frac{\theta ar^{a-2}}{2}(s_1^2+s_2^2)\quad 
% (\because a>1/2)
% \\
% &=&
% 1+\theta ar^{a-2}(-rc_1-(s_1^2+s_2^2)/2)
% \\
% &=&
% 1+\theta ar^{a-2}(-r^2/2+(1+c_2)^2)\quad 
% (\mbox{by\ elementary\ calculation})
% \\
% &\geq&
% 1-\theta ar^a/2
% \\
% &\geq&
% 1-\theta a4^a/2
% \\
% &\geq&
% 0.
%\end{eqnarray*}
%Thus $D^2\psi(x|\theta)\succeq 0$.

\section*{Acknowledgements}

This study was partially supported by the Global Center of
Excellence ``The research and training center for new development in
mathematics''
and by the Ministry of Education, Science, Sports and Culture, Grant-in-Aid
for Young Scientists (B), No.~19700258.

% ====================================================================
% References
% ====================================================================

%\bibliographystyle{plainnat}
%\bibliographystyle{abbrvnat}
%\bibliography{fgrad}

\end{document}